\documentclass[11pt]{article}
\usepackage{graphicx}
\usepackage{tabularx}
\usepackage{url}
\usepackage{amsthm}
\usepackage{amsmath}
\usepackage{amsfonts}
\usepackage[labelfont=bf]{caption}

\theoremstyle{plain}
\newtheorem{theorem}{Theorem}
\newtheorem{lemma}{Lemma}
\newtheorem{proposition}{Proposition}

\theoremstyle{definition}
\newtheorem{definition}{Definition}

\theoremstyle{remark}

\newtheorem{exmp}{Example}

\textheight=21cm
\topmargin=-1cm
\oddsidemargin=0.7cm
\textwidth=15.5cm
\date{}

\begin{document}

\title{A Local Approach to Studying the Time and Space Complexity of Deterministic and Nondeterministic Decision Trees}

\author{Kerven Durdymyradov and Mikhail Moshkov \\
Computer, Electrical and Mathematical Sciences \& Engineering Division \\ and Computational Bioscience Research Center\\
King Abdullah University of Science and Technology (KAUST) \\
Thuwal 23955-6900, Saudi Arabia\\ \{kerven.durdymyradov,mikhail.moshkov\}@kaust.edu.sa
}

\maketitle

\begin{abstract}
Decision trees and decision rules are intensively studied and used in different areas of computer science. The questions important for the theory of decision trees and rules include relations between decision trees and decision rule systems, time-space tradeoff for decision trees, and  time-space tradeoff for decision rule systems.
In this paper, we study arbitrary infinite binary information systems each of which consists of an infinite set called universe and an infinite set of two-valued functions (attributes) defined on the universe. We consider the notion of a problem over information system, which is described by a finite number of attributes and a mapping associating a decision to each tuple of attribute values. As algorithms for problem solving, we investigate deterministic and nondeterministic decision trees
that use only attributes from the problem description.
Nondeterministic decision trees are representations of decision rule systems that sometimes have less space complexity than the original rule systems.
As time and space complexity, we study the depth and the number of nodes in the decision trees. In the worst case, with the growth of the number of attributes in the problem description, (i) the minimum depth of deterministic decision trees grows either  as a logarithm or linearly, (ii) the minimum depth of nondeterministic decision trees either is bounded from above by a constant or grows linearly, (iii) the minimum number of nodes in deterministic decision trees has either  polynomial or exponential growth, and (iv) the minimum number of nodes in nondeterministic decision trees has either  polynomial or exponential growth. Based on these results, we divide the set of all infinite binary information systems  into three complexity classes. This allows us to identify nontrivial relationships between deterministic decision trees and decision rules systems represented by  nondeterministic decision trees. For each class, we study issues related to time-space trade-off for deterministic and nondeterministic decision trees.
\end{abstract}

{\it Keywords}: Deterministic decision trees, Nondeterministic decision trees, Time complexity, Space complexity, Complexity classes, Time-space trade-off.

\section{Introduction}

\label{S1}

Decision trees and decision rules are intensively studied  and used in different areas of computer science. The questions important for the theory of decision trees and rules include relations between decision trees and decision rule systems, time-space tradeoff for decision trees, and  time-space tradeoff for decision rule systems.

In this paper, instead of decision rule systems we study nondeterministic decision trees. These trees can be considered as  representations of decision rule systems that sometimes have less space complexity than the original rule systems. We study  problems over  infinite binary information systems and divide the set of all infinite binary information systems  into three complexity classes depending on the worst case time and space complexity of deterministic and nondeterministic decision trees solving problems. This allows us to identify nontrivial relationships between deterministic decision trees and decision rule systems represented by nondeterministic decision trees. For each complexity class, we study issues related to time-space trade-off for deterministic and nondeterministic decision trees.

Decision trees \cite%
{AbouEisha19,Alsolami20,Breiman84,Moshkov05,Moshkov11,Rokach07} and systems
of decision rules \cite%
{BorosHIK97,BorosHIKMM00,ChikalovLLMNSZ13,FurnkranzGL12,MPZ08,Pawlak91,PawlakPS08,PawlakS07,SkowronR92}
are widely used as classifiers to predict a decision for a new object, as a
means of knowledge representation, and as algorithms for solving problems of
fault diagnosis, computational geometry, combinatorial optimization, etc.

Decision trees and rules are among the most interpretable models for
classifying and representing knowledge \cite{Molnar22}. In order to better
understand decision trees, we should not only minimize the number of their
nodes, but also the depth of decision trees to avoid the consideration of
long conjunctions of conditions corresponding to long paths in these trees.
Similarly, for decision rule systems, we should minimize both the total length of rules
and the maximum length of a rule in the system. When we consider decision
trees and decision rule systems as algorithms (usually sequential for
decision trees and parallel for decision rule systems), we should have in
mind the same bi-criteria optimization problems to minimize space and time
complexity of these algorithms.

In this paper, we represent systems of decision rules as nondeterministic
decision trees to compress them and to emphasize the possibility of
processing different decision rules in parallel. We consider deterministic
and nondeterministic decision trees as algorithms and study their space and
time complexity, paying particular attention to time and space complexity
relationships. Examples of deterministic and nondeterministic decision trees computing Boolean function $x_1 \wedge x_2$ can be found in Fig. \ref{fig1}.

\begin{figure}[ht]
	\begin{center}
		\includegraphics[width=80mm]{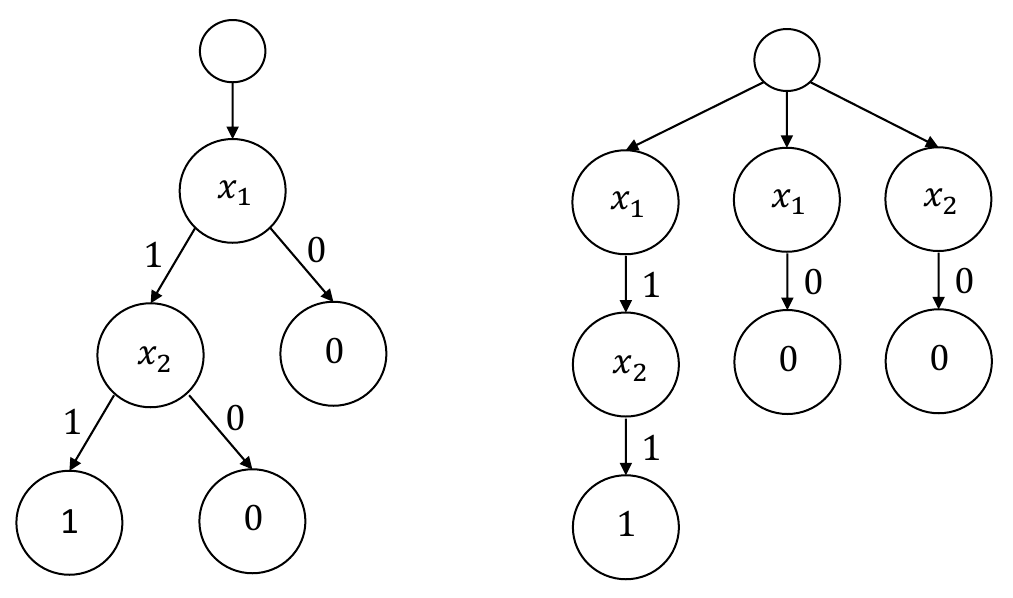}
		\caption{Deterministic and nondeterministic decision trees computing function $x_1 \wedge x_2$}
		\label{fig1}
	\end{center}
\end{figure}

Infinite systems of attributes and decision trees over these systems have
been intensively studied, especially systems of linear and algebraic
attributes and the corresponding linear \cite{Dobkin78,Dobkin79,Moravek72}
and algebraic decision trees \cite%
{Ben-Or83,Gabrielov,Grigoriev95,Grigoriev98,Steele82,Yao92,Yao94}. Years
ago, one of the authors initiated the study of decision trees over arbitrary
infinite systems of attributes \cite%
{Moshkov94b,Moshkov94a,Moshkov95,Moshkov96,Moshkov05a}. In this paper, we study
decision trees over arbitrary infinite systems of binary attributes
represented in the form of infinite binary information systems.

General information system introduced by Pawlak \cite{Pawlak81} consists of a universe (a set of
objects) and a set of attributes (functions with finite image) defined on
the universe. An information system is called infinite, if both its universe and the set of
attributes are infinite.  An information
system is called binary if each of its attributes has values from the set $%
\{0,1\}$.

Any problem over an information system is described by a finite number of
attributes that divide the universe into domains in which these attributes
have fixed values. A decision is attached to each domain. For a given object
from the universe, it is required to find the decision attached to the
domain containing this object.

As algorithms solving these problems, deterministic and nondeterministic
decision trees are studied. As time complexity of a decision tree, we
consider its depth, i.e., the maximum number of nodes labeled with
attributes in a path from the root to a terminal node. As space complexity
of a decision tree, we consider the number of its nodes.

There are two approaches to the study of infinite information systems: local
when in decision trees solving a problem we can use only attributes from the
problem description, and global when in the decision trees solving a problem
we can use arbitrary attributes from the considered information system.
In this paper, we study decision trees in the framework of the local approach.

To the best of our knowledge, time-space trade-offs for decision trees over
infinite information systems were not studied in the framework of the local
approach prior to the present paper except for its conference version \cite{Kerven23b}, which does not contain proofs.
The paper \cite{Moshkov23} was the first one in which
the time-space trade-offs for decision trees over infinite information
systems were studied in the framework of the global approach. 

Results obtained in \cite{Moshkov23} are  different from the results obtained in the present paper. Apart from the difference in approach, in \cite{Moshkov23}, the set of all infinite information systems is divided not into three but into five families and the criteria for the behavior of functions characterizing the minimum depth of the  deterministic and nondeterministic
decision tree are completely different in comparison with the present paper. However, many of the definitions and results of the two papers appear similar. In the present paper,
we use some auxiliary statements proved in \cite{Moshkov23} and adapt some proofs from \cite{Moshkov23} to the case of the local approach.

Based on the results obtained in the present paper and in \cite%
{Moshkov05,Moshkov11}, we describe possible types of behavior of four
functions $h_{U}^{ld},h_{U}^{la},L_{U}^{ld},L_{U}^{la}$ that characterize
worst case time and space complexity of deterministic and nondeterministic
decision trees over an infinite binary information system $U$ (index $l$
refers to the local approach). Decision trees solving a problem can use only attributes from the problem description.

The function $h_{U}^{ld}$ characterizes the growth in the worst case of the
minimum depth of a deterministic decision tree solving a problem with the growth of the number
of attributes in the problem description. The function $h_{U}^{ld}$ is
either grows as a logarithm or linearly.

The function $h_{U}^{la}$ characterizes the growth in the worst case of the
minimum depth of a nondeterministic decision tree solving a problem  with the growth of the
number of attributes in the problem description. The function $h_{U}^{la}$
is either bounded from above by a constant or grows linearly.

The function $L_{U}^{ld}$ characterizes the growth in the worst case of the
minimum number of nodes in a deterministic decision tree solving a problem with the growth of
the number of attributes in the problem description. The function $L_{U}^{ld}
$ has either polynomial or exponential growth.

The function $L_{U}^{la}$ characterizes the growth in the worst case of the
minimum number of nodes in a nondeterministic decision tree solving a
problem   with the
growth of the number of attributes in the problem description. The function $%
L_{U}^{la}$ has either polynomial or exponential growth.

Each of the functions $h_{U}^{ld},h_{U}^{la},L_{U}^{ld},L_{U}^{la}$ has two
types of behavior. The tuple $(h_{U}^{ld},h_{U}^{la},L_{U}^{ld},L_{U}^{la})$
has three types of behavior. All these types are described in the paper and
each type is illustrated by an example.

There are three complexity classes of infinite binary information systems
corresponding to the three possible types of the tuple $%
(h_{U}^{ld},h_{U}^{la},L_{U}^{ld},L_{U}^{la})$. For each class, we study
joint behavior of time and space complexity of decision trees. The obtained
results are related to time-space trade-off for deterministic and
nondeterministic decision trees.

A pair of functions $(\varphi ,\psi )$ is called a boundary $ld$-pair of the
information system $U$ if, for any problem over $U$, there exists a
deterministic decision tree over $z$, which solves this problem and for
which the depth is at most $\varphi (n)$ and the number of nodes is at most $%
\psi (n)$, where $n$ is the number of attributes in the problem description.
 An
information system $U$ is called $ld$-reachable if the pair $%
(h_{U}^{ld},L_{U}^{ld})$ is a boundary  $%
ld$-pair of the system $U$. For nondeterministic decision trees, the notions
of a boundary $la$-pair of an information system and $%
la$-reachable information system are defined in a similar way. For
deterministic decision trees, the best situation is when the considered
information system is $ld$-reachable: for any boundary $ld$-pair $(\varphi
,\psi )$ for an information system $U$ and any natural $n$, $\varphi (n)\geq
h_{U}^{ld}(n)$ and $\psi (n)\geq L_{U}^{ld}(n)$. For nondeterministic
decision trees, the best situation is when the information system is $la$%
-reachable.

For all complexity classes, all information systems from the class are $ld$%
-reachable. For two out of the three complexity classes, all information
systems from the class are $la$-reachable. For the remaining class, all
information systems from the class are not $la$-reachable. For all
information systems $U$ that are not $la$-reachable, we find nontrivial
boundary $la$-pairs, which are sufficiently close to $(h_{U}^{la},L_{U}^{la})
$.

The rest of the paper is organized as follows: Section \ref{S2} contains
main results, Sections \ref{S3}-\ref{S5} -- proofs of these results, and Section \ref{S6} -- short conclusions.

\section{Main Results}
\label{S2}

Let $A$ be an infinite set and $F$ be an infinite set of functions that are
defined on $A$ and have values from the set $\{0,1\}$. The pair $U=(A,F)$ is
called an \emph{infinite binary information system} \cite{Pawlak81}, the
elements of the set $A$ are called \emph{objects}, and the functions from $F$
are called \emph{attributes}. The set $A$ is called sometimes the \emph{%
universe} of the information system $U$.

A \emph{problem over} $U$ is a tuple of the form $z=(\nu ,f_{1},\ldots
,f_{n})$, where $\nu :\{0,1\}^{n}\rightarrow \mathbb{N}$, $\mathbb{N}$ is
the set of natural numbers $\{1,2,\ldots \}$, and $f_{1},\ldots ,f_{n}\in F$. We do not require attributes $f_{1},\ldots ,f_{n}$ to be pairwise distinct.
The problem $z$ consists in finding the value of the function $z(x)=\nu
(f_{1}(x),\ldots ,f_{n}(x))$ for a given object $a\in A$. The value $\dim
z=n$ is called the \emph{dimension of the problem} $z$.

Various problems of combinatorial optimization, pattern recognition, fault
diagnosis, probabilistic reasoning, computational geometry, etc., can be
represented in this form.

As algorithms for problem solving we consider decision trees. A \emph{%
decision tree over the information system} $U$ is a directed tree with a
root in which the root and edges leaving the root are not labeled, each
terminal node is labeled with a number from $\mathbb{N}$, each working node
(which is neither the root nor a terminal node) is labeled with an attribute
from $F$, and each edge leaving a working node is labeled with a number from
the set $\{0,1\}$. A decision tree is called \emph{deterministic} if only
one edge leaves the root and edges leaving an arbitrary working node are
labeled with different numbers.

Let $\Gamma $ be a decision tree over $U$ and
\begin{equation*}
\xi =v_{0},d_{0},v_{1},d_{1},\ldots ,v_{m},d_{m},v_{m+1}
\end{equation*}%
be a directed path from the root $v_{0}$ to a terminal node $v_{m+1}$ of $%
\Gamma $ (we call such path \emph{complete}). Define a subset $A(\xi )$ of
the set $A$ as follows. If $m=0$, then $A(\xi )=A$. Let $m>0$ and, for $%
i=1,\ldots ,m$, the node $v_{i}$ be labeled with the attribute $f_{j_{i}}$
and the edge $d_{i}$ be labeled with the number $\delta _{i}$. Then
\begin{equation*}
A(\xi )=\{a:a\in A,f_{j_{1}}(a)=\delta _{1},\ldots ,f_{j_{m}}(a)=\delta
_{m}\}.
\end{equation*}

The \emph{depth} of the decision tree $\Gamma $ is the maximum number of
working nodes in a complete path of $\Gamma $. Denote by $h(\Gamma )$ the
depth of $\Gamma $ and by $L(\Gamma )$ -- the number of nodes in $\Gamma $.

A decision tree over the information system $U$ is called a \emph{decision
tree over the problem }$z=(\nu ,f_{1},\ldots ,f_{n})$ if each working node
of $\Gamma $ is labeled with an attribute from the set $\{f_{1},\ldots
,f_{n}\}$.

The decision tree $\Gamma $ over $z$ solves the problem $z$ \emph{%
nondeterministically} if, for any object $a\in A$, there exists a complete
path $\xi $ of $\Gamma $ such that $a\in A(\xi )$ and, for each $a\in A$ and
each complete path $\xi $ such that $a\in A(\xi )$, the terminal node of $%
\xi $ is labeled with the number $z(a)$ (in this case, we can say that $%
\Gamma $ is a \emph{nondeterministic decision tree solving the problem} $z$%
). In particular, if the decision tree $\Gamma $ solves the problem $z$
nondeterministically, then, for each complete path $\xi $ of $\Gamma $,
either the set $A(\xi )$ is empty or the function $z(x)$ is constant on the
set $A(\xi )$. The decision tree $\Gamma $ over $z$ solves the problem $z$
\emph{deterministically} if $\Gamma $ is a deterministic decision tree,
which solves the problem $z$ nondeterministically (in this case, we can say
that $\Gamma $ is a \emph{deterministic decision tree solving the problem} $z
$).

Let $P(U)$ be the set of all problems over $U$. For a problem $z$ from $P(U)$%
, let $h_{U}^{ld}(z)$ be the minimum depth of a decision tree over $z$
solving the problem $z$ deterministically, $h_{U}^{la}(z)$ be the minimum
depth of a decision tree over $z$ solving the problem $z$
nondeterministically, $L_{U}^{ld}(z)$ be the minimum number of nodes in a
decision tree over $z$ solving the problem $z$ deterministically, and $%
L_{U}^{la}(z)$ be the minimum number of nodes in a decision tree over $z$
solving the problem $z$ nondeterministically.

We consider four functions defined on the set $\mathbb{N}$ in the following
way: $h_{U}^{ld}(n)=\max $ $h_{U}^{ld}(z)$, $h_{U}^{la}(n)=\max $ $%
h_{U}^{la}(z)$, $L_{U}^{ld}(n)=\max $ $L_{U}^{ld}(z)$, and $%
L_{U}^{la}(n)=\max $ $L_{U}^{la}(z)$, where the maximum is taken among all
problems $z$ over $U$ with $\dim z\leq n$. These functions describe how the
minimum depth and the minimum number of nodes of deterministic and
nondeterministic decision trees solving problems are growing in the worst
case with the growth of problem dimension. To describe possible types of
behavior of these four functions, we need to define some properties of
infinite binary information systems.

\begin{definition}
We will say that the information system $U=(A,F)$ satisfies the \emph{%
condition of reduction} if there exists $m\in \mathbb{N}$ such that, for
each compatible on $A$ system of equations
$
\{f_{1}(x)=\delta _{1},\ldots ,f_{r}(x)=\delta _{r}\},
$
where $r\in \mathbb{N}$, $f_{1},\ldots ,f_{r}\in F$ and $\delta
_{1},\ldots ,\delta _{r}\in \{0,1\}$, there exists a subsystem of this
system, which has the same set of solutions from $A$ and contains at most $m$
equations. In this case, we will say that $U$ satisfies the \emph{condition
of reduction with parameter} $m$.
\end{definition}

We now consider two examples of infinite binary information systems that satisfy the condition of reduction. These examples are close to ones considered in Section 3.4 of the book \cite{AbouEisha19}.

\begin{exmp}
 Let $d,t\in \mathbb{N}$,  $f_{1},\ldots ,f_{t}$
be functions from $\mathbb{R}^{d}$ to $\mathbb{R}$, where $\mathbb{R}$ is
the set of real numbers, and $s$ be a function from $\mathbb{R}$ to $\{0,1\}$
such that $s(x)=0$ if $x<0$ and $s(x)=1$ if $x\geq 0$. Then the infinite
binary information system $(\mathbb{R}^{d},F)$, where $F=\{s(f_{i}+c):i=1,%
\ldots ,t,c\in \mathbb{R}\}$, satisfies the condition of reduction with
parameter $2t$. If $f_{1},\ldots ,f_{t}$ are linear functions, then we deal
with attributes corresponding to $t$ families of parallel hyperplanes in $%
\mathbb{R}^{d}$ what is common for decision trees for datasets with $t$
numerical attributes only \cite{Breiman84}.
\end{exmp}

\begin{exmp}
Let $P$ be
 the Euclidean plane and $l$ be a straight line
(line in short) in the plane. This line divides the plane into two open
half-planes $H_{1}$ and $H_{2}$, and the line $l$. Two attributes correspond
to the line $l$. The first attribute takes value $0$ on points from $H_{1}$,
and value $1$ on points from $H_{2}$ and $l$. The second one takes value $0$
on points from $H_{2}$, and value $1$ on points from $H_{1}$ and $l$. We
denote by $\mathcal{L}$ the set of all attributes corresponding to lines in
the plane. Infinite binary information systems of the form $(P,L)$, where $%
L\subseteq \mathcal{L}$, are called \emph{linear} information systems.

Let $l$ be a line in the plane. Let us denote by $\mathcal{L}(l)$ the set of
all attributes corresponding to lines, which are parallel to $l$. Let $p$ be
a point in the plane. We denote by $\mathcal{L}(p)$ the set of all
attributes corresponding to lines, which pass through $p$. A set $C$ of
attributes from $\mathcal{L}$ is called a \emph{clone} if $C\subseteq
\mathcal{L}(l)$ for some line $l$ or $C\subseteq \mathcal{L}(p)$ for some
point $p$. In \cite{Moshkov07}, it was proved that a linear information
system $(P,L)$ satisfies the condition of reduction if and only if $L$ is
the union of a finite number of clones.
\end{exmp}

\begin{definition}
Let $U=(A,F)$ be an infinite binary information system.
A subset $\{f_{1},\ldots ,f_{m}\}$ of the set $F$ will be called \emph{%
independent} if, for any $\delta _{1},\ldots ,\delta _{m}\in \{0,1\}$, the
system of equations $\{f_{1}(x)=\delta _{1},\ldots ,f_{m}(x)=\delta _{m}\},$
has a solution from the set $A$. The empty set of attributes is independent
by definition.
\end{definition}

\begin{definition}
We define the parameter $I(U)$, which is called the \emph{independence
dimension} or I-\emph{dimension} of the information system $U$ (this notion
is similar to the notion of independence number of family of sets \cite%
{Naiman96}) as follows. If, for each $m\in \mathbb{N}$, the set $F$ contains
an independent subset of cardinality $m$, then $I(U)=\infty $. Otherwise, $%
I(U)$ is the maximum cardinality of an independent subset of the set $F$.
\end{definition}

We now consider examples of infinite binary information systems with finite I-dimension and with infinite I-dimension. More examples can be found in Lemmas \ref{L6.1}-\ref{L6.3}.

\begin{exmp}
Let $m,t\in \mathbb{N}$. We denote by $Pol(m)$ the set of all polynomials,
which have integer coefficients and depend on variables $x_{1},\ldots ,x_{m}$%
. We denote by $Pol(m,t)$ the set of all polynomials from $Pol(m)$ such that
the degree of each polynomial is at most $t$. We define infinite binary
information systems $U(m)$ and $U(m,t)$ as follows: $U(m)=(\mathbb{R}%
^{m},F(m))$ and $U(m,t)=(\mathbb{R}^{m},F(m,t))$, where  $%
F(m)=\{s(p):p\in Pol(m)\}$, $F(m,t)=\{s(p):p\in Pol(m,t)\}$, and $s(x)=0$
if $x<0$ and $s(x)=1$ if $x\geq 0$. One can show that the system $U(m)$ has
infinite I-dimension and the system $U(m,t)$ has finite I-dimension.
\end{exmp}

We now consider four statements that describe possible types of behavior of
functions $h_{U}^{ld}(n)$, $h_{U}^{la}(n)$, $L_{U}^{ld}(n)$, and $%
L_{U}^{la}(n)$. The next statement follows immediately from Theorem 4.3 \cite%
{Moshkov05}.

\begin{proposition}
\label{P1}For any infinite binary information system $U$, the function $%
h_{U}^{ld}(n)$ has one of the following two types of behavior:

\textrm{(LOG)} If the system $U$ satisfies the condition of reduction, then $%
h_{U}^{ld}(n)=\Theta (\log n)$.

\textrm{(LIN)} If the system $U$ does not satisfy the condition of
reduction, then $h_{U}^{ld}(n)=n$ for any $n\in \mathbb{N}$.
\end{proposition}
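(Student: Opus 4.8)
The plan is to derive both cases from Theorem~4.3 of \cite{Moshkov05}, whose statement and proof I would first rephrase in the present terminology (checking that ``problem over $U$'' and ``decision tree over $z$'' here mean what they do there), and then to supply the two bounds directly. The argument splits, as in the statement, according to whether $U$ satisfies the condition of reduction; case \textrm{(LIN)} is elementary, whereas case \textrm{(LOG)} is where the work lies.

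For case \textrm{(LIN)}, suppose $U$ does not satisfy the condition of reduction. The inequality $h_{U}^{ld}(n)\le n$ is immediate, since one may assume no attribute is queried twice along a path, so $h_{U}^{ld}(z)\le\dim z$. For the matching lower bound I would show that for every $n$ there is a compatible system $\{f_{1}(x)=\delta_{1},\ldots,f_{n}(x)=\delta_{n}\}$ over $U$ that is \emph{irredundant}, in the sense that deleting any single equation strictly enlarges the set of solutions in $A$. This I would obtain in two steps: applying the failure of the condition of reduction with parameter $n$, pick a compatible system having no equivalent subsystem with at most $n$ equations and pass to an inclusion-minimal equivalent subsystem, which is irredundant and has more than $n$ equations; then observe that deleting one equation from an irredundant compatible system again yields an irredundant compatible system, and use this to shrink to exactly $n$ equations. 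Given such a system, consider the problem $z=(\nu,f_{1},\ldots,f_{n})$ with $\nu(\delta_{1},\ldots,\delta_{n})=1$ and $\nu$ equal to $2$ on every other tuple. Any deterministic decision tree $\Gamma$ solving $z$ must, on the nonempty domain $D=\{a\in A: f_{i}(a)=\delta_{i}\text{ for all }i\}$, follow one complete path $\xi$; correctness forces $A(\xi)=D$, so the subsystem read along $\xi$ is equivalent to the full system, hence by irredundancy consists of all $n$ equations, so $\xi$ has at least $n$ working nodes and $h(\Gamma)\ge n$. Therefore $h_{U}^{ld}(n)=n$.

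For case \textrm{(LOG)}, suppose $U$ satisfies the condition of reduction with a parameter $m$. For the lower bound $h_{U}^{ld}(n)=\Omega(\log n)$ --- which in fact holds for every infinite binary information system --- I would construct greedily $k$ pairwise distinct attributes realizing at least $k+1$ domains: start with a non-constant attribute (there are at most two constant attributes, while $F$ is infinite); given a choice realizing $t\ge k+1$ domains, some attribute is non-constant on one of the corresponding $t$ cells (otherwise every attribute would be constant on each cell, forcing $F$ to be finite), and adjoining it produces at least $t+1$ cells. Equipping the resulting problem $z$ with a decision function injective on the realized value-tuples gives $\dim z=k$ and at least $k+1$ domains carrying pairwise distinct decisions; since a deterministic tree of depth $d$ has at most $2^{d}$ complete paths, $h_{U}^{ld}(z)\ge\log_{2}(k+1)$. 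For the upper bound $h_{U}^{ld}(n)=O(\log n)$, note that for a problem $z$ with $\dim z\le n$ every nonempty domain is the solution set of a subsystem of at most $m$ of its defining equations, and distinct domains give rise to distinct such sub-assignments, so the number $N(z)$ of domains satisfies $N(z)\le\sum_{i=0}^{m}\binom{n}{i}2^{i}=O(n^{m})$; since in addition each domain has a certificate of at most $m$ attributes, Theorem~4.3 of \cite{Moshkov05} (equivalently, a greedy construction that at each stage queries a bounded number of attributes so as to drop a constant fraction of the still-relevant domains) produces a deterministic decision tree over $z$ of depth $O(m\log N(z))=O(\log n)$. Combining the two bounds yields $h_{U}^{ld}(n)=\Theta(\log n)$.

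I expect the main obstacle to be the upper bound in case \textrm{(LOG)}: converting the polynomial bound on $N(z)$ together with the bounded certificates into a genuinely balanced deterministic tree that respects the local restriction (only attributes of $z$ may be used). The required balancing/splitting argument is exactly the content of the cited Theorem~4.3 of \cite{Moshkov05}, which I would invoke rather than reprove. In case \textrm{(LIN)}, by contrast, the only delicate point is that irredundancy is preserved under deletion of an equation, which is routine.
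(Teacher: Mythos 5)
Your proposal is correct, but it takes a more self-contained route than the paper, whose entire proof of this proposition is the single remark that it follows immediately from Theorem 4.3 of \cite{Moshkov05}. You re-derive most of the content directly: in case (LIN), your construction of a compatible irredundant system of exactly $n$ equations is sound (irredundancy is indeed preserved when an equation is deleted, since each remaining equation's witness still works and the solution set only grows, and compatibility is preserved for the same reason), and the argument that the complete path followed by the distinguished domain must read all $n$ pairwise distinct attributes, giving $h_{U}^{ld}(n)=n$, is correct; in case (LOG), your greedy construction of $k$ attributes realizing at least $k+1$ nonempty domains correctly yields the lower bound $h_{U}^{ld}(n)\geq \log_{2}(n+1)$, in effect reproving $N_{U}(n)\geq n+1$. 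The only step you delegate, exactly as the paper does, is the (LOG) upper bound via Theorem 4.3 of \cite{Moshkov05}; that is legitimate, but note that your parenthetical gloss of that theorem is too weak as stated: knowing only that every realizable domain has a certificate of at most $m$ equations does not by itself produce a bounded block of queries eliminating a constant fraction of the remaining domains. The standard halving argument queries a bounded separating set for an arbitrary (for instance, majority-value, possibly unrealizable) tuple, and for incompatible tuples one additionally needs that every incompatible system over $U$ contains an incompatible subsystem with at most $m+1$ equations --- this does follow from the condition of reduction (pass to a minimal incompatible subsystem and reduce all but one of its equations), but it is an extra step your sketch omits. Since you explicitly defer that step to the cited theorem, there is no gap; what your route buys is self-contained proofs of (LIN) and of the logarithmic lower bound, at the cost of duplicating material already contained in \cite{Moshkov05}.
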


The next statement follows immediately from Theorem 8.2 \cite{Moshkov11}.

\begin{proposition}
\label{P2}For any infinite binary information system $U=(A,F)\,$, the
function $h_{U}^{la}(n)$ has one of the following two types of behavior:

\textrm{(CON)} If the system $U$ satisfies the condition of reduction, then $%
h_{U}^{la}(n)=O(1)$.

\textrm{(LIN)} If the system $U$ does not satisfy the condition of reduction,
then $h_{U}^{la}(n)=n$ for any $n\in \mathbb{N}$.
\end{proposition}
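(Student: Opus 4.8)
The plan is to treat the two cases of the dichotomy separately and, in each, to produce explicit nondeterministic decision trees. The one tool I would set up first is the standard dictionary between a complete path $\xi$ of a tree over a problem $z=(\nu ,f_{1},\ldots ,f_{n})$ and the system of equations read along $\xi $: the set $A(\xi )$ is exactly the solution set of that subsystem of $\{f_{i}(x)=\delta _{i}\}$-type equations, and $\Gamma $ solves $z$ nondeterministically precisely when (i) every $a\in A$ lies in $A(\xi )$ for some complete path $\xi $ and (ii) whenever $a\in A(\xi )$ the terminal label of $\xi $ equals $z(a)$.

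For the (CON) case, assume $U$ satisfies the condition of reduction with parameter $m$. Given $z=(\nu ,f_{1},\ldots ,f_{n})$, for each $\bar \delta \in \{0,1\}^{n}$ for which $\{f_{1}(x)=\delta _{1},\ldots ,f_{n}(x)=\delta _{n}\}$ is compatible I would use the condition of reduction to pick a set $S_{\bar \delta }\subseteq \{1,\ldots ,n\}$ with $|S_{\bar \delta }|\le m$ such that $\{f_{i}(x)=\delta _{i}:i\in S_{\bar \delta }\}$ has the same solution set $D_{\bar \delta }$. Then I build the nondeterministic tree whose root branches, for each such $\bar \delta $, into a path querying the attributes $f_{i}$, $i\in S_{\bar \delta }$, each answered $\delta _{i}$, and ending at a terminal node labeled $\nu (\bar \delta )$. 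This tree is over $z$ and has depth at most $m$, and conditions (i), (ii) are routine: an object $a$ lies in $A(\xi )$ for the path of its own tuple $(f_{1}(a),\ldots ,f_{n}(a))$, and if $a\in A(\xi )$ for the path of some $\bar \delta $ then $a\in D_{\bar \delta }$ forces $(f_{1}(a),\ldots ,f_{n}(a))=\bar \delta $, so the terminal label $\nu (\bar \delta )$ equals $z(a)$. Hence $h_{U}^{la}(z)\le m$ for every problem $z$, so $h_{U}^{la}(n)=O(1)$.

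For the (LIN) case, the bound $h_{U}^{la}(n)\le n$ is immediate from the complete binary tree querying $f_{1},\ldots ,f_{n}$. For the lower bound I would first restate the failure of the condition of reduction as: the cardinalities of \emph{irreducible} compatible systems over $U$ (those with no proper subsystem having the same solution set) are unbounded, which holds because any compatible system can be thinned to an irreducible subsystem with the same solutions. The step I expect to need the most care is upgrading this to: irreducible compatible systems exist of \emph{every} size $n$. This should follow from the fact that every prefix of an irreducible system is irreducible — if $\{g_{1}(x)=\gamma _{1},\ldots ,g_{p}(x)=\gamma _{p}\}$ is irreducible and $i\le n\le p$, then the object witnessing that the $i$-th equation cannot be dropped from the whole system already witnesses it for the prefix of length $n$. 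Fixing an irreducible compatible system $\{f_{1}(x)=\delta _{1},\ldots ,f_{n}(x)=\delta _{n}\}$ with solution set $D$, I take the problem $z$ over $f_{1},\ldots ,f_{n}$ with $z(a)=1$ for $a\in D$ and $z(a)=2$ otherwise. In any nondeterministic tree solving $z$, a complete path $\xi $ through a point of $D$ must end at a node labeled $1$, hence $A(\xi )\subseteq D$; but $A(\xi )$ is the solution set of a subsystem of $\{f_{i}(x)=\delta _{i}\}$, so it also contains $D$, hence equals $D$, and irreducibility forces that subsystem to be the whole system — so $\xi $ has at least $n$ working nodes. Thus $h_{U}^{la}(z)=n$, and since $\dim z=n$ we get $h_{U}^{la}(n)=n$.

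In short, the argument is mostly bookkeeping with the sets $A(\xi )$ together with the definition of nondeterministic solvability; the only non-mechanical ingredient is the prefix-closure of irreducibility, which is what lets one pass from ``arbitrarily large irreducible systems exist'' to ``irreducible systems of every size exist'' in the (LIN) case. (One can alternatively cite Theorem 8.2 of \cite{Moshkov11}, on which the proposition rests, but the self-contained argument above is short enough to state directly.)
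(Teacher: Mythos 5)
Your proposal is correct, but it takes a different route from the paper: the paper does not prove Proposition \ref{P2} internally at all, deriving it ``immediately'' from Theorem 8.2 of \cite{Moshkov11}, whereas you give a self-contained argument. Your (CON) construction is in substance the same reduction-based bundle of short paths that the paper uses later, in the proof of Lemma \ref{L10}, to obtain the boundary $la$-pair $(m,(m+1)L_{U}^{la}(n)/2+1)$; so that half is already implicit in the paper, and your version even records the sharper bound $h_{U}^{la}(n)\leq m$ with $m$ the reduction parameter. Your (LIN) lower bound is the genuinely new ingredient relative to this paper: passing from failure of the condition of reduction to irreducible compatible systems, noting that every (prefix, indeed every nonempty) subsystem of an irreducible system is irreducible via the per-equation witnesses, and then using the dictionary between a complete path $\xi$ and the subsystem it reads to force any path through a solution of the chosen system to contain all $n$ equations. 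The small points one should check are all fine: in an irreducible compatible system the attributes are pairwise distinct (duplicates with equal values could be dropped, with unequal values the system is incompatible), so ``the subsystem along $\xi$ equals the whole system'' really forces at least $n$ working nodes even if the path repeats queries; and $D\neq\emptyset$ by compatibility, so a path through $D$ exists. What each approach buys: the citation keeps the paper aligned with the external source it builds on, while your argument makes the dichotomy transparent and keeps the proof elementary and local to the definitions used here.
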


\begin{proposition}
\label{P3}For any infinite binary information system $U$, the function $%
L_{U}^{ld}(n)$ has one of the following two types of behavior:

\textrm{(POL)} If the system $U$ has finite I-dimension, then for any $n\in
\mathbb{N}$,
\begin{equation*}
2(n+1)\leq L_{U}^{ld}(n)\leq 2(4n)^{I(U)}.
\end{equation*}

\textrm{(EXP)} If the system $U$ has infinite I-dimension, then for any $%
n\in \mathbb{N}$,
\begin{equation*}
L_{U}^{ld}(n)=2^{n+1}.
\end{equation*}
\end{proposition}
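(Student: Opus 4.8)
The plan is to establish the two matching bounds in each of the cases (POL) and (EXP) separately, treating the (EXP) case and the lower bound in (POL) as the short part and concentrating the work on the upper bound $L_{U}^{ld}(n)\le 2(4n)^{I(U)}$ in (POL). For a problem $z=(\nu ,f_{1},\ldots ,f_{n})$ the object I would track is the set $\Phi (z)=\{(f_{1}(a),\ldots ,f_{n}(a)):a\in A\}\subseteq \{0,1\}^{n}$ of realizable attribute tuples; its elements are in bijection with the nonempty cells of the partition of $A$ generated by $f_{1},\ldots ,f_{n}$. Two elementary facts will be used throughout. First, a deterministic decision tree that solves $z$ has at least $|\{z(a):a\in A\}|$ terminal nodes (every $a$ reaches a unique terminal node, which must be labeled $z(a)$), and a count of edges shows that a deterministic decision tree with $T$ terminal nodes has at least $2T$ nodes in total, with equality when every working node has out-degree two. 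Second, for any $z$ with $\dim z\le n$ the ``full'' tree that queries all the attributes of $z$ along every branch solves $z$ and has at most $2^{n+1}$ nodes.

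From these facts the (EXP) case is immediate: if $I(U)=\infty $, then for each $n$ we take an independent $n$-element subset $\{f_{1},\ldots ,f_{n}\}$ of $F$, so that $\Phi (z)=\{0,1\}^{n}$, and choose $\nu $ injective; then $z$ takes $2^{n}$ pairwise distinct values on $A$, which forces $L_{U}^{ld}(n)\ge 2^{n+1}$, while the full tree gives $L_{U}^{ld}(n)\le 2^{n+1}$. The lower bound in (POL), $L_{U}^{ld}(n)\ge 2(n+1)$, uses only that $F$ is infinite: by a greedy refinement argument one builds $f_{1},\ldots ,f_{n}\in F$ whose partition of $A$ has at least $n+1$ cells, because whenever the current partition has at most $n$ cells not every attribute of $F$ can be constant on all of those cells (otherwise $F$ would be finite), so some attribute splits a cell and strictly increases the number of cells. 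Choosing $\nu $ injective on the realizable tuples then forces at least $n+1$ terminal nodes, hence at least $2(n+1)$ nodes.

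The main task is the upper bound in (POL). The key step is to bound $N:=|\Phi (z)|$. If a set $J\subseteq \{1,\ldots ,n\}$ is such that the projection of $\Phi (z)$ onto the coordinates in $J$ is all of $\{0,1\}^{J}$, then the attributes $\{f_{j}:j\in J\}$ are pairwise distinct and form an independent subset of $F$, so $|J|\le I(U)$; that is, $\Phi (z)$, viewed as a family of subsets of $\{1,\ldots ,n\}$, has VC dimension at most $I(U)$. The Sauer--Shelah lemma (or a direct induction on $n$, cf. the auxiliary statements of \cite{Moshkov23}) then gives $N\le \sum_{i=0}^{I(U)}\binom{n}{i}$, and an elementary estimate shows $\sum_{i=0}^{I}\binom{n}{i}\le (4n)^{I}$ for all integers $n,I\ge 1$ (recall that $I(U)\ge 1$ for every infinite binary information system, since not all attributes can be constant). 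It remains to construct, for any $z$ with $\dim z\le n$, a deterministic decision tree with exactly $N$ terminal nodes: recursively, given a nonempty $S\subseteq \Phi (z)$, make a terminal node if $|S|=1$, and otherwise choose a coordinate $i$ on which $S$ is non-constant, label the current node with $f_{i}$, and recurse on the two nonempty parts $S_{0}=\{t\in S:t_{i}=0\}$ and $S_{1}=\{t\in S:t_{i}=1\}$; the number of terminal nodes satisfies $T(S)=T(S_{0})+T(S_{1})$ with $T(\{t\})=1$, hence $T(\Phi (z))=N$. One then checks that the set $A(\xi )$ attached to the terminal node reached for a tuple $t$ equals the cell of $t$ (once $S$ has shrunk to $\{t\}$ no other realizable tuple agrees with $t$ on the queried coordinates), so labeling that node with $\nu (t)$ makes the tree solve $z$; thus $L_{U}^{ld}(z)\le 2N\le 2(4n)^{I(U)}$ for every $z$ with $\dim z\le n$.

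I expect the principal obstacle to be exactly this upper bound in (POL): recognizing that the independence dimension is a VC dimension in disguise so that a Sauer--Shelah-type count applies, and then checking that the recursive construction really produces a valid deterministic decision tree with one terminal node per cell -- the delicate point being that a terminal node's set $A(\xi )$, defined only by the attributes queried along its path, coincides with a full cell of the partition generated by all of $f_{1},\ldots ,f_{n}$. The (EXP) bounds and the (POL) lower bound are short once the two elementary facts and the greedy refinement are in place.
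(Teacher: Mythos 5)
Your proof is correct, but it takes a different route from the paper. The paper derives Proposition~\ref{P3} in two steps that you bypass: first it proves the exact identity $L_{U}^{ld}(n)=L_{U}^{la}(n)=2N_{U}(n)$ (Proposition~\ref{P5}), whose lower bound is established even for nondeterministic trees via the lemmas on minimal trees and full subtrees (Lemmas~\ref{L1}--\ref{L4}), and then it imports the dichotomy $n+1\leq N_{U}(n)\leq (4n)^{I(U)}$ versus $N_{U}(n)=2^{n}$ from \cite{Moshkov05} (Proposition~\ref{P6}, itself resting on Sauer--Shelah-type results). You instead prove the two-sided bounds directly and only for deterministic trees: the edge-count inequality $L(\Gamma)\geq 2L_{t}(\Gamma)$ replaces Lemmas~\ref{L1}--\ref{L4} on the lower-bound side, the greedy refinement argument replaces the cited lower bound $N_{U}(n)\geq n+1$, and you reprove the upper bound on the number of realizable tuples by observing that $I(U)$ bounds the VC dimension of $\Phi(z)$ and applying Sauer--Shelah together with $\sum_{i=0}^{I}\binom{n}{i}\leq(4n)^{I}$ (your side remarks that $I(U)\geq 1$ and that the recursive splitting tree has exactly $|\Phi(z)|$ terminal nodes with $A(\xi)$ equal to a full cell are the right points to check, and they do check out). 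What the paper's factorization buys is the exact value $2N_{U}(n)$ and the equality with the nondeterministic count, which is exactly what is needed for Proposition~\ref{P4} and reused later (e.g.\ in Lemmas~\ref{L7}--\ref{L10}); what your route buys is a self-contained, more elementary argument for Proposition~\ref{P3} alone, at the cost of not yielding $L_{U}^{la}(n)=L_{U}^{ld}(n)$ or the exact constant in the POL case beyond the stated bounds.
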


\begin{proposition}
\label{P4}For any infinite binary information system $U$ and any $n\in
\mathbb{N}$,%
\begin{equation*}
L_{U}^{la}(n)=L_{U}^{ld}(n).
\end{equation*}
\end{proposition}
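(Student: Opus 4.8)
I would prove the stronger pointwise identity $L_U^{ld}(z)=L_U^{la}(z)$ for every problem $z\in P(U)$; Proposition~\ref{P4} then follows by taking the maximum over all $z$ with $\dim z\le n$. The inequality $L_U^{la}(z)\le L_U^{ld}(z)$ is trivial, since every deterministic decision tree solving $z$ is in particular a nondeterministic one solving $z$. The content is the reverse inequality, which I would obtain from the following transformation: \emph{given any nondeterministic decision tree $\Gamma$ over $z$ solving $z$, there is a deterministic decision tree $\Gamma'$ over $z$ solving $z$ with $L(\Gamma')\le L(\Gamma)$}; applied to an optimal $\Gamma$ this finishes the proof. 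A preliminary observation makes an induction possible: a decision tree over $z=(\nu,f_1,\dots,f_n)$ only ever ``sees'' the finite colored set $\Delta=\Delta(z)=\{(\delta_1,\dots,\delta_n): f_i(a)=\delta_i,\ i=1,\dots,n,\text{ for some }a\in A\}\subseteq\{0,1\}^n$ with the coloring induced by $z$, since non-realizable tuples are never met and objects with equal attribute tuples are indistinguishable. So I would work with an abstract pair $(\Delta,z)$, allowing trees to test any of the $n$ coordinates, and build $\Gamma'$ by induction on $L(\Gamma)$.

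If $z$ is constant on $\Delta$, a root with one labeled leaf ($2$ nodes) solves $(\Delta,z)$ and is optimal. Otherwise every child of the root of $\Gamma$ is a working node. Suppose first the root has a \emph{single} child $w$, testing a coordinate $i$. If all edges of $w$ carry the same value, that test is useless and one recurses directly on the forest below $w$, which solves $(\Delta,z)$ with fewer nodes. Otherwise let $\Phi_0,\Phi_1$ be the forests hanging on $w$'s $0$- and $1$-edges; then adjoining a fresh root above $\Phi_\delta$ gives a nondeterministic tree solving $(\Delta\cap\{a_i=\delta\},z)$ with $1+|\Phi_\delta|<L(\Gamma)$ nodes, so by the induction hypothesis there is a deterministic tree $D_\delta$ solving it with $L(D_\delta)\le 1+|\Phi_\delta|$. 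Gluing $D_0$ and $D_1$ (each with its own root deleted) onto the two edges of a new working node labeled $f_i$, attached to a new root, yields a deterministic $\Gamma'$ with $L(\Gamma')=L(D_0)+L(D_1)\le 2+|\Phi_0|+|\Phi_1|=L(\Gamma)$.

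The real work is the case in which the root of $\Gamma$ has $k\ge 2$ children $w_1,\dots,w_k$ — several ``first attributes'', which is exactly the structural freedom a deterministic tree lacks. Here I would first bring $\Gamma$ to a normal form using rewrite moves that preserve ``solves $z$'' and never increase $L$: pruning any subtree that no realizable tuple enters; deleting any subtree hanging on a $\delta$-edge of a node that also has a $\delta$-edge to a leaf (the leaf already covers everything reaching the node with value $\delta$); and merging two siblings that test the same coordinate (combining, value by value, the forests below them). In the resulting reduced tree the children of the root test pairwise distinct coordinates $i_1,\dots,i_k$. It then remains to replace such a reduced tree with $k\ge2$ by one having strictly fewer root-children and no more nodes; iterating brings us to $k=1$, where the previous paragraph applies. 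The natural move is to push the cores $T_2,\dots,T_k$ of $w_2,\dots,w_k$ underneath $w_1$: for each $j\ge2$ and each $\delta$, restrict $T_j$ to the subcube $\{a_{i_1}=\delta\}$ and hang it on a new $\delta$-edge of $w_1$; correctness is routine, and the only issue is the node count.

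That count is the main obstacle. Restricting a core $T_j$ to $\{a_{i_1}=0\}$ and to $\{a_{i_1}=1\}$ simultaneously can a priori nearly double it, so a careless rearrangement genuinely may inflate the tree; the argument must exploit consistency. Monochromaticity of complete paths forces $T_j$ to actually test $i_1$ on every branch where the coloring of the tuples it catches depends on $i_1$, while on the part where it does not one is free to keep only one of the two restrictions; carrying out this bookkeeping (using that $\Gamma$ is reduced and that $i_1,\dots,i_k$ are distinct) is what shows the rearranged tree has at most $L(\Gamma)$ nodes. Once the transformation is established the proposition is immediate, and together with Proposition~\ref{P3} it also pins down the behaviour of $L_U^{la}(n)$. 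As a consistency check, note that in the exponential case of Proposition~\ref{P3}, if $f_1,\dots,f_n$ is independent and $\nu$ is the parity function, every monochromatic subcube is a single tuple, so any nondeterministic tree solving this problem has at least $2^n$ leaves and hence $2^{n+1}$ nodes, matching $L_U^{ld}(n)=2^{n+1}$.
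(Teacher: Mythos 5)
Your plan hinges on a single claim that you never actually establish: that any nondeterministic decision tree over $z$ can be converted into a deterministic one solving $z$ with no more nodes, which would give the pointwise identity $L_U^{la}(z)=L_U^{ld}(z)$ for every problem $z$. In the crucial case (root with $k\ge 2$ children) you yourself identify the obstacle --- pushing the subtrees of $w_2,\dots,w_k$ under both branches of $w_1$ can nearly double their size --- and then dispose of it with ``carrying out this bookkeeping \dots\ is what shows the rearranged tree has at most $L(\Gamma)$ nodes.'' That sentence is the whole theorem; no argument is given for it, and it is far from routine. Worse, the pointwise statement you are trying to prove is much stronger than Proposition~\ref{P4} and is dubious in itself: the realizable complete paths of a nondeterministic tree form a \emph{cover} of the realizable tuples by monochromatic subcubes, while the leaves of a deterministic tree form a tree-structured \emph{partition}, and covers can be genuinely cheaper than partitions for specific $\nu$ (this is the well-known tension between subcube cover complexity and decision tree size). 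So a problem-by-problem simulation without node increase is not something one should expect to hold, and nothing in your sketch exploits consistency strongly enough to rule this out. Your final ``consistency check'' also quietly uses that a nondeterministic tree with $2^n$ terminal nodes has at least $2^{n+1}$ nodes, i.e.\ $L_w\ge L_t-1$; for nondeterministic trees this is exactly the nontrivial point (a node may have several equally labeled outgoing edges), which the paper handles via minimality and the ``full subtree'' Lemmas \ref{L2} and \ref{L4}.

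The paper's proof never needs a simulation. It shows (Proposition \ref{P5}) that both worst-case functions equal $2N_U(n)$: for the deterministic upper bound, a minimal deterministic tree has binary branching and only realizable paths (Lemma \ref{L1}), distinct realizable paths give distinct attribute tuples, so $L_t\le N_U(n)$ and $L_w=L_t-1$ (Lemma \ref{L3}); for the nondeterministic lower bound, one takes a hardest problem with $N_U(f_1,\dots,f_n)=N_U(n)$ and $\nu$ injective, so every realizable tuple forces its own terminal node, $L_t\ge N_U(n)$, and for a minimal nondeterministic tree Lemmas \ref{L2}--\ref{L4} give $L_w\ge L_t-1$, hence $L_U^{la}(n)\ge 2N_U(n)$. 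The maximization over problems of dimension at most $n$ is essential here: equality of the two worst cases is achieved because the worst problem for both models is the one with pairwise distinct decisions, where nondeterminism cannot save terminal nodes. To repair your proposal you would either have to prove the rearrangement bound in full (and first convince yourself it is even true), or abandon the pointwise route and argue about the extremal problem directly, as the paper does.
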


Let $U$ be an infinite binary information system. Proposition \ref{P1}
allows us to correspond to the function $h_{U}^{ld}(n)$ its type of
behavior from the set $\{\mathrm{LOG},\mathrm{LIN}\}$. Proposition \ref{P2}
allows us to correspond to the function $h_{U}^{la}(n)$ its type of behavior
from the set $\{\mathrm{CON},\mathrm{LIN}\}$. Propositions \ref{P3} and \ref%
{P4} allow us to correspond to each of the functions $L_{U}^{ld}(n)$ and $%
L_{U}^{la}(n)$ its type of behavior from the set $\{\mathrm{POL},\mathrm{EXP}%
\}$. A tuple obtained from the tuple
\begin{equation*}
(h_{U}^{ld}(n),h_{U}^{la}(n),L_{U}^{ld}(n),L_{U}^{la}(n))
\end{equation*}%
by replacing functions with their types of behavior is called the \emph{local type} of
the information system $U$. We now describe all possible local types of infinite
binary information systems.

\begin{theorem}
\label{T1}For any infinite binary information system, its local type coincides
with one of the rows of Table \ref{tab1}. Each row of Table \ref{tab1} is
the local type of some infinite binary information system.
\end{theorem}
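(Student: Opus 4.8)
The plan is to reduce the statement to a single structural fact relating the two conditions that appear in Propositions \ref{P1}--\ref{P3}, and then to name one witnessing information system per row. \emph{First}, observe that the local type of $U$ depends on only two yes/no properties of $U$. By Proposition \ref{P4} we always have $L_U^{la}(n)=L_U^{ld}(n)$, so the fourth coordinate of the local type is a copy of the third. By Propositions \ref{P1} and \ref{P2}, the first two coordinates equal $(\mathrm{LOG},\mathrm{CON})$ when $U$ satisfies the condition of reduction and $(\mathrm{LIN},\mathrm{LIN})$ otherwise; by Proposition \ref{P3}, the third coordinate equals $\mathrm{POL}$ when $I(U)<\infty$ and $\mathrm{EXP}$ when $I(U)=\infty$. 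Hence the local type is a function of the pair of truth values of ``$U$ satisfies the condition of reduction'' and ``$I(U)<\infty$'', so a priori there are at most four local types.

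\emph{Second} -- and this is the crux -- I would show that these two properties are not independent: if $U=(A,F)$ satisfies the condition of reduction with parameter $m$, then $I(U)\le m$. Indeed, let $\{f_1,\dots,f_k\}\subseteq F$ be independent. Then the system $\{f_1(x)=1,\dots,f_k(x)=1\}$ is compatible on $A$, so by the condition of reduction it has a subsystem $\{f_i(x)=1:i\in J\}$, $|J|\le m$, with the same set of solutions from $A$. If $k>m$, fix $j\notin J$; by independence there is $a\in A$ with $f_i(a)=1$ for $i\in J$ and $f_j(a)=0$, and such $a$ is a solution of the subsystem but not of the full system -- a contradiction. So $k\le m$ and $I(U)\le m<\infty$. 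Consequently the truth-value combination ``condition of reduction holds'' together with ``$I(U)=\infty$'' cannot occur, i.e. the tuple $(\mathrm{LOG},\mathrm{CON},\mathrm{EXP},\mathrm{EXP})$ is never a local type; the three remaining combinations are precisely the rows of Table \ref{tab1}, which proves the first assertion of the theorem.

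\emph{Third}, for the converse I would exhibit one information system realizing each row. For the row $(\mathrm{LOG},\mathrm{CON},\mathrm{POL},\mathrm{POL})$, take the system of Example 1 (it satisfies the condition of reduction with parameter $2t$, hence by the step above has finite I-dimension), or a linear information system $(P,L)$ with $L$ a union of finitely many clones. For the row $(\mathrm{LIN},\mathrm{LIN},\mathrm{EXP},\mathrm{EXP})$, take $U(m)$ from Example 4, which has infinite I-dimension and therefore (again by the step above) does not satisfy the condition of reduction. For the row $(\mathrm{LIN},\mathrm{LIN},\mathrm{POL},\mathrm{POL})$, one needs a system with finite I-dimension that fails the condition of reduction; the linear information system $(P,\mathcal{L})$ of all lines in the plane is a natural candidate (it has finite I-dimension, and by the criterion quoted in Example 2 it fails the condition of reduction because $\mathcal{L}$ is not a union of finitely many clones), and such witnesses are worked out in Lemmas \ref{L6.1}--\ref{L6.3}.

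The main obstacle is concentrated in the third step: verifying that the chosen witnesses really have the stated I-dimension and condition-of-reduction status. The row $(\mathrm{LIN},\mathrm{LIN},\mathrm{POL},\mathrm{POL})$ is the most delicate, since there one must simultaneously establish finiteness of $I(U)$ (an upper bound on the size of independent subsets, via a counting argument on the number of realizable sign patterns) and failure of the condition of reduction (producing, for every $m$, a compatible system with no equivalent subsystem of size at most $m$); this is precisely what Lemmas \ref{L6.1}--\ref{L6.3} are designed to supply. Steps one and two, by contrast, are immediate from the quoted propositions and the short pigeonhole argument above.
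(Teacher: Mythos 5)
Your proposal follows essentially the same route as the paper: reduce the local type to the two yes/no parameters (condition of reduction, finiteness of $I(U)$) via Propositions \ref{P1}--\ref{P4}, rule out the combination ``reduction holds and $I(U)=\infty$'', and then exhibit a witness per row. Your step two is exactly the contrapositive of the paper's Lemma \ref{L6.0}, proved by the same independence/pigeonhole argument (the paper drops to the all-zeros system, you use all-ones and make the role of independence explicit -- if anything your version spells out what the paper calls ``clear''), and your step one is the paper's Lemma \ref{L5} in words. The only real divergence is in the realizability half. The paper builds three tailor-made systems over $\mathbb{N}$ ($U_1$ with threshold attributes $l_i$, $U_2$ with point indicators $p_i$, $U_3$ with all binary functions) and verifies their parameters directly in Lemmas \ref{L6.1}--\ref{L6.3}; you instead invoke the geometric and polynomial systems of Examples 1, 2 and 4, which works but is less self-contained: infinite I-dimension of $U(m)$ is only asserted in Example 4, failure of reduction for $(P,\mathcal{L})$ rests on the external clone criterion of \cite{Moshkov07}, and finiteness of $I(P,\mathcal{L})$ needs the cell-counting bound you gesture at (for $m\geq 3$ lines one has at most $1+m+\binom{m}{2}<2^m$ cells, so independent sets have size at most $2$). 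Also note that your closing claim that Lemmas \ref{L6.1}--\ref{L6.3} ``work out'' these particular witnesses is not accurate -- those lemmas analyze the simpler systems $U_1,U_2,U_3$, not the linear or polynomial ones -- so if you keep your witnesses you must supply these verifications yourself; the paper's choice of $U_2$ for row 2 (where non-reduction and $I=1$ are immediate) shows how to avoid the delicate geometric case entirely.
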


\begin{table}[h]
\centering
\caption{Possible local types of infinite binary information systems}
\begin{tabular}{|l|llll|} \hline
& $h_{U}^{ld}(n)$ & $h_{U}^{la}(n)$ & $L_{U}^{ld}(n)$ & $%
L_{U}^{la}(n)$ \\\hline
1 & $\mathrm{LOG}$ & $\mathrm{CON}$ & $\mathrm{POL}$ & $\mathrm{POL}$
\\
2 & $\mathrm{LIN}$ & $\mathrm{LIN}$ & $\mathrm{POL}$ & $\mathrm{POL}$ \\
3 & $\mathrm{LIN}$ & $\mathrm{LIN}$ & $\mathrm{EXP}$ & $\mathrm{EXP}$ \\ \hline
\end{tabular}
  \label{tab1}
\end{table}

For $i=1,2,3$, we denote by $W_{i}^{l}$ the class of all infinite binary
information systems, whose local type coincides with the $i$th row of Table \ref%
{tab1}. We now study for each of these complexity classes joint behavior of
the depth and number of nodes in decision trees solving problems.

For a given infinite binary information system $U$, we will consider pairs
of functions $(\varphi ,\psi )$ such that, for any problem $z$ over $U$,
there exists a deterministic decision tree over $z$ solving $z$ with the
depth at most $\varphi (\dim z)$ and the number of nodes at most $\psi (\dim
z)$. We will study such pairs and will call them boundary $ld$-pairs.

\begin{definition}
A pair of functions $(\varphi ,\psi )$, where $\varphi :\mathbb{N}%
\rightarrow \mathbb{N}\cup \{0\}$ and $\psi :\mathbb{N}\rightarrow \mathbb{N}%
\cup \{0\}$, will be called a \emph{boundary $ld$-pair} of the information
system $U$ if, for any problem $z$ over $U$, there exists a decision tree $%
\Gamma $ over $z$, which solves the problem $z$ deterministically and for
which $h(\Gamma )\leq \varphi (n)$ and $L(\Gamma )\leq \psi (n)$, where $%
n=\dim z$.
\end{definition}

We are interested in finding boundary $ld$-pairs with functions that grow as
slowly as possible.
It is clear that, for any boundary $ld$-pair $(\varphi ,\psi )$ of the
information system $U$, the following inequalities hold: $\varphi (n)\geq
h_{U}^{ld}(n)$ and $\psi (n)\geq L_{U}^{ld}(n)$. So the best possible
situation is when $(h_{U}^{ld},L_{U}^{ld})$ is a boundary $ld$-pair of $U$.

\begin{definition}
An information system $U$ will be called $ld$-\emph{reachable} if the pair $%
(h_{U}^{ld},L_{U}^{ld})$ is a boundary $ld$-pair of the system $U$.
\end{definition}

We now consider similar notions for nondeterministic decision trees: the
notion of boundary $la$-pair and the notion of $la$-reachable information
system.

\begin{definition}
A pair of functions $(\varphi ,\psi )$, where $\varphi :\mathbb{N}%
\rightarrow \mathbb{N}\cup \{0\}$ and $\psi :\mathbb{N}\rightarrow \mathbb{N}%
\cup \{0\}$, will be called a \emph{boundary $la$-pair} of the information
system $U$ if, for any problem $z$ over $U$, there exists a decision tree $%
\Gamma $ over $z$, which solves the problem $z$ nondeterministically and for
which $h(\Gamma )\leq \varphi (n)$ and $L(\Gamma )\leq \psi (n)$, where $%
n=\dim z$.
\end{definition}

\begin{definition}
An information system $U$ will be called $la$-\emph{reachable} if the pair $%
(h_{U}^{la},L_{U}^{la})$ is a boundary $la$-pair of the system $U$.
\end{definition}

Note that for deterministic decision trees, the best situation is when the
considered information system is $ld$-reachable and for nondeterministic
decision trees -- when the information system is $la$-reachable.

Each information system from the classes $W_{1}^{l},W_{2}^{l}$, and $W_{3}^{l}$ is $%
ld$-reachable.
Each information system from the classes $W_{2}^{l}$ and $W_{3}^{l}$ is $la$%
-reachable. Each information system from the class $W_{1}^{l}$ is not $la$%
-reachable. For all information systems $U$, which are not $la$-reachable,
we find nontrivial boundary $la$-pairs that are sufficiently close to $%
(h_{U}^{la},L_{U}^{la})$.

The obtained results are related to time-space trade-off for deterministic
and nondeterministic decision trees. Details can be found in the following
three theorems.

\begin{theorem}
\label{T2} Let $U$ be an information system from the class $W_{1}^{l}$. Then

\textrm{(a)} The system $U$ is $ld$-reachable.

\textrm{(b)} The system $U$ is not $la$-reachable and there exists $m\in
\mathbb{N}$ such that $$(m,(m+1)L_{U}^{la}(n)/2+1)$$ is a boundary $la$-pair
of the system $U$.
\end{theorem}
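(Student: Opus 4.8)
The plan is as follows. From $U\in W_{1}^{l}$ I will first record, via Propositions \ref{P1} and \ref{P3}, that $U$ satisfies the condition of reduction --- fix a \emph{least} parameter $m$ for which it does --- and that $I(U)<\infty$. For a problem $z=(\nu ,f_{1},\dots ,f_{n})$ over $U$ I call the nonempty sets $\{x\in A:f_{1}(x)=\delta _{1},\dots ,f_{n}(x)=\delta _{n}\}$ the \emph{cells} of $z$, write $\ell (z)$ for their number, and use that the cells partition $A$, that every $A(\xi )$ arising in a decision tree over $z$ is a union of cells, and that $z$ is constant on each cell. Two elementary facts will be the workhorses. \emph{(i)} By the usual normalization of a deterministic decision tree (deleting an edge below which $A(\xi )$ is empty, and contracting any working node whose attribute is already constant on the current subset --- neither step raises the depth), every deterministic tree solving $z$ becomes one whose working nodes each split their subset into two nonempty parts; such a reduced tree has at most $\ell (z)$ terminal nodes, hence at most $2\ell (z)$ nodes. \emph{(ii)} Taking $\nu '$ injective on the realizable value tuples gives a problem $z'=(\nu ',f_{1},\dots ,f_{n})$ with $\ell (z)$ distinct decisions, so every decision tree solving $z'$ has at least $\ell (z)$ terminal nodes; a deterministic such tree then has at least $2\ell (z)$ nodes, whence $L_{U}^{ld}(z')\geq 2\ell (z)$ and, by Proposition \ref{P4}, $L_{U}^{la}(n)=L_{U}^{ld}(n)\geq 2\ell (z)$ for all $n\geq \dim z$.

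Part \textrm{(a)} then follows at once: given $z$ with $\dim z=n$, I would take a deterministic tree solving $z$ of depth $h_{U}^{ld}(z)$ and normalize it as in (i); its depth stays $h_{U}^{ld}(z)\leq h_{U}^{ld}(n)$ and its size is $\leq 2\ell (z)\leq L_{U}^{ld}(n)$ by (ii), so $(h_{U}^{ld},L_{U}^{ld})$ is a boundary $ld$-pair. For the boundary $la$-pair in part \textrm{(b)}: with cells $K_{1},\dots ,K_{\ell (z)}$ of $z$, use the condition of reduction to write each $K_{j}=\{x:f_{i_{1}^{j}}(x)=\gamma _{1}^{j},\dots ,f_{i_{r_{j}}^{j}}(x)=\gamma _{r_{j}}^{j}\}$ with $r_{j}\leq m$, and let $\Gamma $ be the nondeterministic tree made of a root together with, for each $j$, a chain of working nodes labeled $f_{i_{1}^{j}},\dots ,f_{i_{r_{j}}^{j}}$ with edges labeled $\gamma _{1}^{j},\dots ,\gamma _{r_{j}}^{j}$ ending at a terminal node carrying the value of $z$ on $K_{j}$. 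Along the $j$th chain $A(\xi _{j})=K_{j}$, so $\Gamma $ solves $z$ nondeterministically, $h(\Gamma )=\max _{j}r_{j}\leq m$, and $L(\Gamma )=1+\sum _{j}(r_{j}+1)\leq 1+(m+1)\ell (z)\leq (m+1)L_{U}^{la}(n)/2+1$ by (ii). Hence $(m,(m+1)L_{U}^{la}(n)/2+1)$ is a boundary $la$-pair.

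The substantive part is that $U$ is \emph{not} $la$-reachable. First, $h_{U}^{la}(n)\leq m$ always (by $\Gamma $ above), and $h_{U}^{la}(n)=m$ for large $n$: since $m$ is least, some compatible $m$-equation system is irreducible (no proper subsystem has the same solution set $H$), and the problem on exactly those $m$ attributes with an injective decision function has $H$ among its cells, so every nondeterministic tree solving it has a complete path $\xi $ with $A(\xi )=H$, which needs $m$ distinct attributes and hence depth $\geq m$. The plan is then to construct, for arbitrarily large $n$, a problem $z_{n}$ with $\dim z_{n}\leq n$, with $\ell (z_{n})=M(n):=\max \{\ell (z):\dim z\leq n\}$ cells all carrying distinct decisions, each cell \emph{irreducible of order $m$} (its solution set needs $m$ of the available literals), and with every attribute belonging to the $m$-element critical set of only boundedly many cells. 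By (i) and (ii) one has $L_{U}^{la}(n)=2M(n)$, so some (necessarily deep) tree for $z_{n}$ already meets the node bound $L_{U}^{la}(n)$; but any nondeterministic tree $T$ solving $z_{n}$ of depth $\leq h_{U}^{la}(n)\leq m$ must give each of the $M(n)$ cells its own path of exactly $m$ working nodes ending in its own terminal node, and irreducibility forces two distinct cells to have distinct deepest working nodes (sharing one would force identical critical $m$-sets), so $T$ has at least $M(n)-O(1)$ nodes at depth $m$; the bounded-overlap condition then adds $\Omega (M(n))$ more internal nodes at shallower depths, and a level-by-level count yields $L(T)\geq (2+\Omega (1))M(n)>2M(n)=L_{U}^{la}(n)$ (indeed, if the critical $m$-sets are pairwise distinct this is already $L(T)\geq 2M(n)+1$). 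Consequently no nondeterministic tree for $z_{n}$ has both depth $\leq h_{U}^{la}(n)$ and $\leq L_{U}^{la}(n)$ nodes, so $(h_{U}^{la},L_{U}^{la})$ is not a boundary $la$-pair, and the pair $(m,(m+1)L_{U}^{la}(n)/2+1)$ above is the promised nontrivial substitute.

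The step I expect to be the main obstacle is this construction of the $z_{n}$ inside an \emph{arbitrary} $U\in W_{1}^{l}$: one must extract, from the facts that $U$ satisfies the condition of reduction with least parameter $m$, that $F$ is infinite, and that $h_{U}^{ld}(n)=\Theta (\log n)$ is unbounded, families of $\Theta (M(n))$ cells that are all irreducible of order $m$, carry distinct decisions, and have boundedly overlapping critical sets --- and then convert ``bounded overlap'' into the strict inequality $L(T)>L_{U}^{la}(n)$ with the constants in hand; I expect this to lean on structural results from earlier sections and from \cite{Moshkov23}. The guiding example is the system of thresholds on a line, where $m=2$, $M(n)=n+1$, and for the problem ``which of the $n+1$ consecutive intervals contains the input'' every depth-$2$ nondeterministic decision tree has at least $\tfrac{5}{2}(n-1)+1>2(n+1)$ nodes.
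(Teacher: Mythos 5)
Your treatment of part (a) and of the boundary $la$-pair is essentially the paper's own argument: (a) is the normalization-plus-counting proof of Lemma \ref{L7} (prune to a tree in $G_{d}^{2}(U)$ with all paths realizable, bound terminal nodes by the number of realizable tuples, and compare with the lower bound $L_{U}^{ld}(n)\geq 2N_{U}(n)$ coming from an injective decision function), and your chain construction for each cell, with at most $m$ equations per chain and merged roots, is exactly Lemma \ref{L10}, giving $h\leq m$ and $L\leq (m+1)N_{U}(n)+1=(m+1)L_{U}^{la}(n)/2+1$. These parts are fine.

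The genuine gap is the claim that $U$ is not $la$-reachable, which is the substantive content of Theorem \ref{T2}(b), and which you yourself flag as ``the main obstacle.'' Your plan requires, inside an arbitrary $U\in W_{1}^{l}$, a family of problems $z_{n}$ whose $\Theta(N_{U}(n))$ cells are all irreducible of order exactly $m$, carry distinct decisions, and have boundedly overlapping critical $m$-sets, followed by a level-by-level count yielding $L(T)\geq (2+\Omega(1))N_{U}(n)$. None of this is constructed or proved, and the structural hypotheses are too strong in general (already for the threshold system $U_{1}$ the two extreme cells are defined by a single equation, so ``all cells irreducible of order $m$'' fails and one would have to control exactly how many cells degenerate); the concluding inequality is asserted, not derived. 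The paper's proof (Lemma \ref{L9}) avoids all of this with a short counting argument that you are missing: by Proposition \ref{P2}, $h_{U}^{la}(n)\leq c$ for a constant $c$, while $N_{U}(n)\geq n+1$ is unbounded (Proposition \ref{P6}); choose $n$ with $N_{U}(n)>2^{2c}$ and take the injective problem on attributes realizing $N_{U}(n)$. A minimum-size nondeterministic tree $\Gamma$ of depth $\leq c$ solving it lies in $G_{a}^{f}(U)$ (the pruning of Lemma \ref{L2} does not increase depth); it cannot lie in $G_{d}^{2}(U)$, since then $h(\Gamma)\geq\log_{2}L_{t}(\Gamma)\geq\log_{2}N_{U}(n)>2c$; hence Lemma \ref{L4} applies and gives $L_{w}(\Gamma)>L_{t}(\Gamma)-1$, so $L(\Gamma)=1+L_{t}(\Gamma)+L_{w}(\Gamma)>2N_{U}(n)=L_{U}^{la}(n)$ by Proposition \ref{P5}. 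In short, the decisive tool is the dichotomy $G_{d}^{2}(U)$ versus $G_{a}^{f}(U)\setminus G_{d}^{2}(U)$ together with the inequality $L_{w}>L_{t}-1$ imported from \cite{Moshkov23}; without it (or a worked-out substitute), your non-reachability argument does not go through.
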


\begin{theorem}
\label{T3}Let $U$ be an information system from the class $W_{2}^l$. Then

\textrm{(a)} The system $U$ is $ld$-reachable.

\textrm{(b)} The system $U$ is $la$-reachable.
\end{theorem}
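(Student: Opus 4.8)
The plan is to derive both parts from a single pruning observation. \emph{Pruning claim:} for every problem $z$ over $U$ with $\dim z = n$, a decision tree over $z$ that solves $z$ (deterministically for part (a), nondeterministically for part (b)) with the least possible number of nodes has depth at most $n$. Granting this, the theorem is immediate. Since the local type of $U$ is the second row of Table~\ref{tab1}, $U$ does not satisfy the condition of reduction, so Propositions~\ref{P1} and~\ref{P2} give $h_U^{ld}(n) = h_U^{la}(n) = n$ for all $n$. Fix a problem $z$ with $\dim z = n$ and pick a deterministic decision tree $\Gamma$ over $z$ solving $z$ with $L(\Gamma) = L_U^{ld}(z)$; by the pruning claim $h(\Gamma) \le n = h_U^{ld}(n)$, while $L(\Gamma) = L_U^{ld}(z) \le L_U^{ld}(n)$ by the definition of $L_U^{ld}(n)$. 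Hence $(h_U^{ld}, L_U^{ld})$ is a boundary $ld$-pair and $U$ is $ld$-reachable, which is (a). Repeating the argument with a least-node nondeterministic decision tree and Proposition~\ref{P2} gives (b).

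For the pruning claim, suppose $\Gamma$ solves $z$ and some complete path of $\Gamma$ carries two working nodes $v_i$ and $v_j$ labelled by the same attribute $f$, with $v_i$ an ancestor of $v_j$; let $\delta$ be the label of the edge that leaves $v_i$ along this path, and note that every complete path reaching $v_j$ necessarily passes through that same edge of $v_i$. Two elementary facts follow: (i) every complete path $\xi$ that reaches $v_j$ and leaves it along an edge labelled $1-\delta$ has $A(\xi) = \emptyset$, since $\xi$ imposes both $f(a) = \delta$ and $f(a) = 1-\delta$; and (ii) on every complete path $\xi$ reaching $v_j$ along an edge of $v_j$ labelled $\delta$, the constraint $f(a) = \delta$ imposed at $v_j$ is already forced by the edge of $v_i$, so deleting it leaves $A(\xi)$ unchanged. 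Now perform the surgery at $v_j$: delete all subtrees hanging from edges of $v_j$ labelled $1-\delta$, then contract $v_j$ by rerouting the edge entering $v_j$ to bypass $v_j$ and point directly to the children of $v_j$ reached by edges labelled $\delta$ (there may be several such edges in the nondeterministic case, which the definition of a decision tree allows; the construction also keeps the tree deterministic if $\Gamma$ was). By (i) the deleted complete paths covered no object, and by (ii) $A(\xi)$ is preserved for every surviving complete path $\xi$, so the new tree still solves $z$ in the same mode and has strictly fewer nodes, contradicting minimality of $\Gamma$. Thus a least-node tree has no repeated attribute on any complete path, so each complete path has at most $|\{f_1, \dots, f_n\}| \le n$ working nodes.

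The step I expect to be the real obstacle is the correctness check for the surgery in the nondeterministic case: one must verify that both defining conditions of a nondeterministic solution survive the deletion-and-contraction, namely that coverage is preserved (the removed paths all had empty $A(\cdot)$) and that terminal labels remain consistent with $z$ (because $A(\xi)$ is unchanged on every surviving path). One degenerate case needs separate treatment: if $v_j$ has no outgoing edge labelled $\delta$, then the whole subtree rooted at $v_j$ is useless, and we instead delete it together with its incoming edge, a deletion that may cascade toward the root but must halt before reaching it, for otherwise every complete path of $\Gamma$ would have an empty set $A(\cdot)$, contradicting that $\Gamma$ solves $z$ over the nonempty universe $A$. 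In every case at least one node is removed, which is all the argument needs.
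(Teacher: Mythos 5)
Your proposal is correct, but it reaches Theorem~\ref{T3} by a genuinely different route than the paper. The paper proves (a) via a general lemma (Lemma~\ref{L7}): it takes a \emph{depth-optimal} deterministic tree, prunes non-realizable branches and contracts single-edge working nodes to land in $G_{d}^{2}(U)$, bounds the terminal nodes by $N_{U}(n)$, and then converts this into the node bound $L(\Gamma')\leq 2N_{U}(n)=L_{U}^{ld}(n)$ using Lemma~\ref{L3} and Proposition~\ref{P5}; part (b) is analogous (Lemma~\ref{L8}), starting from the full tree that queries $f_{1},\ldots,f_{n}$ in order, so the depth bound $n=h_{U}^{la}(n)$ is automatic and again only the node count needs the $N_{U}(n)$ machinery. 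You go in the opposite direction: you take a \emph{node-optimal} tree (so $L(\Gamma)\leq L_{U}^{ld}(n)$, resp.\ $L_{U}^{la}(n)$, is free) and bound its depth by $n$ through the repeated-attribute surgery, using that a least-node tree cannot repeat an attribute along a complete path; combined with $h_{U}^{ld}(n)=h_{U}^{la}(n)=n$ (Propositions~\ref{P1},~\ref{P2}, since systems in $W_{2}^{l}$ do not satisfy the condition of reduction), this yields both parts. Your surgery is sound: in a rooted directed tree every complete path through $v_j$ uses the same $\delta$-edge of $v_i$, so the $(1-\delta)$-branches at $v_j$ are unrealizable and the $\delta$-branches lose only a redundant constraint, and your handling of the degenerate case (cascading deletion that cannot reach the root because $\Gamma$ covers the nonempty universe) is also correct. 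What each approach buys: yours avoids Proposition~\ref{P5} and the counting Lemmas~\ref{L3},~\ref{L4} entirely and gives the additional information that node-optimal trees for problems over $W_{2}^{l}$-systems automatically have optimal depth; the paper's route, while heavier, yields Lemma~\ref{L7} for \emph{all} infinite binary information systems (reused in Theorems~\ref{T2} and~\ref{T4}), whereas your argument for (a) is tied to the LIN case $h_{U}^{ld}(n)=n$ and would not establish $ld$-reachability for, say, $W_{1}^{l}$.
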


\begin{theorem}
\label{T4}Let $U$ be an information system from the class $W_{3}^l$. Then

\textrm{(a)} The system $U$ is $ld$-reachable.

\textrm{(b)} The system $U$ is $la$-reachable.
\end{theorem}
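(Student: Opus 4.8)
\textbf{Proof proposal for Theorem \ref{T4}.}

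The plan is to derive both parts from a single explicit construction, exploiting the fact that on the class $W_{3}^{l}$ the four complexity functions are pinned down exactly by Propositions \ref{P1}--\ref{P4}. Indeed, if $U=(A,F)\in W_{3}^{l}$, then its local type is the third row of Table \ref{tab1}, so $U$ does not satisfy the condition of reduction and $U$ has infinite I-dimension; hence, by Propositions \ref{P1}--\ref{P4}, $h_{U}^{ld}(n)=h_{U}^{la}(n)=n$ and $L_{U}^{ld}(n)=L_{U}^{la}(n)=2^{n+1}$ for every $n\in\mathbb{N}$. Thus it suffices to exhibit, for each problem over $U$, one tree meeting these worst-case bounds.

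To prove (a), I would take an arbitrary problem $z=(\nu,f_{1},\dots,f_{n})$ over $U$ with $\dim z=n$ and build the ``full'' deterministic decision tree $\Gamma$ that queries $f_{1}$ at the unique node below the root, then $f_{2}$ at both of its successors, and in general queries $f_{i}$ at each of the $2^{i-1}$ nodes on level $i$, for $i=1,\dots,n$; the $2^{n}$ nodes reached after level $n$ are made terminal, the one at the end of the complete path along which the edges leaving the working nodes carry labels $\delta_{1},\dots,\delta_{n}$ receiving the label $\nu(\delta_{1},\dots,\delta_{n})$. One checks immediately that $\Gamma$ is deterministic, that for every $a\in A$ the unique complete path $\xi$ with $a\in A(\xi)$ ends in a terminal node labeled $\nu(f_{1}(a),\dots,f_{n}(a))=z(a)$, so $\Gamma$ solves $z$ deterministically, that $h(\Gamma)=n$, and that $L(\Gamma)=1+\sum_{i=1}^{n}2^{i-1}+2^{n}=2^{n+1}$. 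Since $h(\Gamma)=n=h_{U}^{ld}(n)$ and $L(\Gamma)=2^{n+1}=L_{U}^{ld}(n)$, the pair $(h_{U}^{ld},L_{U}^{ld})$ is a boundary $ld$-pair of $U$, i.e., $U$ is $ld$-reachable.

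Part (b) then requires essentially no extra work: the tree $\Gamma$ constructed above is in particular a nondeterministic decision tree solving $z$, and its depth $n$ and node count $2^{n+1}$ coincide with $h_{U}^{la}(n)$ and $L_{U}^{la}(n)$, respectively; hence $(h_{U}^{la},L_{U}^{la})$ is a boundary $la$-pair and $U$ is $la$-reachable.

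There is no serious obstacle here: all the ``difficulty'' of the theorem is already contained in Propositions \ref{P1}--\ref{P4}, which force the worst-case depth and node count on $W_{3}^{l}$ up to $n$ and $2^{n+1}$, and the naive complete binary tree on $f_{1},\dots,f_{n}$ meets both of these bounds simultaneously for every problem. The only points deserving a careful but routine check are the node-count identity $1+\sum_{i=1}^{n}2^{i-1}+2^{n}=2^{n+1}$ and the observation that allowing repetitions among $f_{1},\dots,f_{n}$ causes no harm (some sets $A(\xi)$ may then be empty, which is irrelevant for nondeterministic, and a fortiori deterministic, solvability).
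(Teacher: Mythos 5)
Your proposal is correct, but it takes a genuinely different (and more specialized) route than the paper. The paper derives part (a) from Lemma \ref{L7}, which shows that \emph{every} infinite binary information system is $ld$-reachable: one starts from a minimum-depth deterministic tree, prunes it so that it lies in $G_{d}^{2}(U)$ with all complete paths realizable, bounds the number of terminal nodes by $N_{U}(n)$, and invokes Lemma \ref{L3} and Proposition \ref{P5} to get $L\leq 2N_{U}(n)=L_{U}^{ld}(n)$. Part (b) follows from Lemma \ref{L8}, which applies whenever $h_{U}^{la}(n)=n$ and begins with essentially your full tree querying $f_{1},\ldots ,f_{n}$ in order, but then prunes non-realizable paths and collapses one-child nodes before counting. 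You instead exploit that on $W_{3}^{l}$ the four functions are pinned down exactly ($h_{U}^{ld}(n)=h_{U}^{la}(n)=n$, $L_{U}^{ld}(n)=L_{U}^{la}(n)=2^{n+1}$), so the unpruned complete binary tree over $f_{1},\ldots ,f_{n}$ already meets both bounds simultaneously, and no counting via $N_{U}$, Proposition \ref{P5}, or Lemmas \ref{L3}--\ref{L4} is needed; your bookkeeping (the node count $2^{n+1}$, the depth $n$, the harmlessness of repeated attributes and empty sets $A(\xi )$) is accurate. What the paper's approach buys is reusability: Lemma \ref{L7} also serves Theorems \ref{T2} and \ref{T3}, and Lemma \ref{L8} also serves Theorem \ref{T3}, where $L_{U}^{ld}(n)$ is only polynomial and your full tree would be far too large; what your approach buys is a short, self-contained argument tailored to the class $W_{3}^{l}$, where the exponential space bound makes pruning unnecessary.
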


Table \ref{tab3} summarizes Theorems \ref{T1}-\ref{T4}. The first column
contains the name of the complexity class. The next four columns describe
the local type of information systems from this class. The last two columns
\textquotedblleft $ld$-pairs\textquotedblright\ and \textquotedblleft $la$%
-pairs\textquotedblright\ contain information about boundary $ld$-pairs and
boundary $la$-pairs for information systems from the considered class:
\textquotedblleft $ld$-reachable\textquotedblright\ means that all
information systems from the class are $ld$-reachable, \textquotedblleft $la$%
-reachable\textquotedblright\ means that all information systems from the
class are $la$-reachable, Th. \ref{T2} (b) is a link to the corresponding
statement Theorem \ref{T2} (b).

\begin{table}[h]
\centering
\caption{Summary of Theorems \ref{T1}-\ref{T4}}
\begin{tabular}{|l|llll|cc|} \hline
 & $h_{U}^{ld}(n)$ & $h_{U}^{la}(n)$ & $L_{U}^{ld}(n)$ & $%
L_{U}^{la}(n)$ & $ld$-pairs & $la$-pairs \\\hline
 $W_{1}^l$ & $\mathrm{LOG}$ & $\mathrm{CON}$ & $\mathrm{POL}$ & $%
\mathrm{POL}$ & $ld$-reachable & Th. \ref{T2} (b) \\
$W_{2}^l$ & $\mathrm{LIN}$ & $\mathrm{LIN}$ & $\mathrm{POL}$ & $\mathrm{POL}$
& $ld$-reachable & $la$-reachable \\
$W_{3}^l$ & $\mathrm{LIN}$ & $\mathrm{LIN}$ & $\mathrm{EXP}$ & $\mathrm{EXP}$
& $ld$-reachable & $la$-reachable \\ \hline
\end{tabular}
  \label{tab3}
\end{table}

\section{Proofs of Propositions \protect\ref{P3} and \protect\ref{P4}}

\label{S3}

In this section, we consider a number of auxiliary statements and prove the two
mentioned propositions.

 Let $\Gamma $ be a decision tree
over an infinite binary information system $U=(A,F)$ and $d$ be an edge of $\Gamma $
entering a node $w$. We denote by $\Gamma (d)$ a subtree of $\Gamma $, whose
root is the node $w$. We say that a complete path $\xi $ of $\Gamma $ is
\emph{realizable} if $A(\xi )\neq \emptyset $.

\begin{lemma}
\label{L1}Let $U=(A,F)$ be an infinite binary information system, $z=(\nu
,f_{1},\ldots ,f_{n})$ be a problem over $U$, and $\Gamma $ be a decision
tree over $z$, which solves the problem $z$ deterministically and for which $%
L(\Gamma )=L_{U}^{ld}(z)$. Then

\textrm{(a)} For each node of $\Gamma $, there exists a realizable complete
path that passes through this node.

\textrm{(b)} Each working node of $\Gamma $ has two edges leaving this node.
\end{lemma}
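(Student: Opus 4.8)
The plan is to prove both parts by contradiction, using the minimality $L(\Gamma)=L_U^{ld}(z)$: in each case I perform a small surgery on $\Gamma$ that strictly decreases the number of nodes while still solving $z$ deterministically. It is convenient to first record a bookkeeping fact. For a node $u$ of $\Gamma$ let $\xi_u$ be the path from the root to $u$, and let $B_u$ be the set of objects $a\in A$ satisfying all equations attached to the working nodes and edges of $\xi_u$; thus $B_u=A$ when $u$ is the root or the unique child of the root. Because $\Gamma$ is deterministic and solves $z$, every object $a$ runs through $\Gamma$ along a unique trajectory and never gets ``stuck'' at a working node (otherwise $a$ would lie in $A(\xi)$ for no complete path $\xi$, contradicting the first condition in the definition of solving $z$ nondeterministically), so its trajectory is a complete path. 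From this one obtains the equivalence: a node $u$ lies on a realizable complete path if and only if $B_u\neq\emptyset$ (if $\xi$ is realizable through $u$ then $\emptyset\neq A(\xi)\subseteq B_u$; conversely an object of $B_u$ runs through $u$ and, not getting stuck, reaches a terminal along a complete path through $u$ which it realizes). So part (a) is exactly the assertion that $B_u\neq\emptyset$ for every node $u$.

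For part (a), suppose instead that some node has an empty $B$, and choose such a node $v$ at minimum distance from the root. Then $v$ is neither the root nor its child (both have $B=A\neq\emptyset$, as $A$ is infinite), so the parent $w$ of $v$ is a working node, say labeled $f_i$, with $B_w\neq\emptyset$ by the minimal choice of $v$; let $\delta$ be the label of the edge $w\to v$. Since $B_v=\{a\in B_w:f_i(a)=\delta\}=\emptyset$, every object reaching $w$ has $f_i$-value $1-\delta$, and as none of them may get stuck, $w$ must also have an outgoing edge labeled $1-\delta$; hence $w$ has exactly two outgoing edges. Now delete the edge $w\to v$ together with the whole subtree rooted at $v$. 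The result $\Gamma_1$ is a deterministic decision tree over $z$ with strictly fewer nodes; its complete paths are precisely those of $\Gamma$ that avoid $v$ (with unchanged associated subsets of $A$), and since no object ever used the deleted edge, $\Gamma_1$ still solves $z$ deterministically — contradicting $L(\Gamma)=L_U^{ld}(z)$.

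For part (b), fix a working node $v$, labeled $f_i$. By determinism it has at most two outgoing edges, so assume it has fewer than two; by part (a) we have $B_v\neq\emptyset$, and an object of $B_v$ must leave $v$ along some edge, so $v$ has exactly one outgoing edge, say $v\to v'$ labeled $\delta$. If some $a\in B_v$ had $f_i(a)=1-\delta$, it would get stuck at $v$; hence $f_i\equiv\delta$ on $B_v$, so the objects routed into the subtree rooted at $v'$ are exactly those routed to $v$. Contract $v$: delete it and redirect the edge entering $v$ so that it now enters $v'$, keeping that edge's label (or leaving it unlabeled if it emanates from the root). This yields a deterministic decision tree over $z$ with one fewer node. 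The only effect on the sets $A(\xi)$ is that complete paths through $v'$ shed the constraint $f_i=\delta$; but every object lying in such a path already satisfies $f_i=\delta$ (in $\Gamma$ it had to pass through $v$), so both defining conditions of solving $z$ deterministically are preserved, again contradicting minimality. Hence every working node of $\Gamma$ has exactly two outgoing edges.

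The main obstacle in both parts is checking that the modified tree still solves $z$ deterministically. For the prune in (a) this is essentially immediate once one knows the discarded branch carries no object. For the contraction in (b) — and for the backward direction of the equivalence in the first paragraph — the delicate point is that deleting the test at a degenerate working node removes a constraint from all paths below it, and one must be sure this does not route any new object into that part of the tree; this is guaranteed precisely by the ``no stuck object'' consequence of the first condition defining a solution, combined (in (b)) with part (a).
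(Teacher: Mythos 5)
Your proof is correct and follows essentially the same route as the paper's: using the minimality $L(\Gamma)=L_U^{ld}(z)$, you prune the subtree below an unrealizable node for (a) and contract a working node with a single leaving edge for (b), obtaining in each case a smaller tree that still solves $z$ deterministically. You simply spell out in more detail what the paper treats as clear — the ``no stuck object'' consequence of determinism plus solvability, and the choice of a bad node closest to the root so that the surgery yields a valid decision tree — which is a careful and correct elaboration, not a different method.
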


\begin{proof}
(a) It is clear that there exists at least one realizable complete path that
passes through the root of $\Gamma $. Let us assume that $w$ is a node of $%
\Gamma $ different from the root and such that there is no a realizable
complete path, which passes through $w$. Let $d$ be an edge entering the
node $w$. We remove from $\Gamma $ the edge $d$ and the subtree $\Gamma (d)$%
. As a result, we obtain a decision tree $\Gamma ^{\prime }$ over $z$, which solves $z
$ deterministically and for which $L(\Gamma ^{\prime })<L(\Gamma )$ but this
is impossible by definition of $\Gamma $.

(b) Let us assume that in $\Gamma $ there exists a working node $w$, which
has only one leaving edge $d$ entering a node $w_{1}$. We remove from $%
\Gamma $ the node $w$ and the edge $d$ and connect the edge entering the
node $w$ to the node $w_{1}$. As a result, we obtain a decision tree $\Gamma
^{\prime }$ over $z$, which solves the problem $z$ deterministically and for which $%
L(\Gamma ^{\prime })<L(\Gamma )$ but this is impossible by definition of $%
\Gamma $. \qed
\end{proof}

Let $U$ be an infinite binary information system, $\Gamma $ be a decision
tree over $U$, and $d$ be an edge of $\Gamma $. The subtree $\Gamma (d)$
will be called \emph{full} if there exist edges $d_{1},\ldots ,d_{m}$ in $%
\Gamma (d)$ such that the removal of these edges and subtrees $\Gamma
(d_{1}),\ldots ,\Gamma (d_{m})$ transforms the subtree $\Gamma (d)$ into a
tree $G$ such that each terminal node of $G$ is a terminal node of $\Gamma $%
, and exactly two edges labeled with the numbers $0$ and $1$ respectively
leave each working node of $G$.

\begin{lemma}
\label{L2}Let $U=(A,F)$ be an infinite binary information system, $z=(\nu
,f_{1},\ldots ,f_{n})$ be a problem over $U$, and $\Gamma $ be a decision
tree over $z$, which solves the problem $z$ nondeterministically and for
which $L(\Gamma )=L_{U}^{la}(z)$. Then

\textrm{(a)} For each node of $\Gamma $, there exists a realizable complete
path that passes through this node.

\textrm{(b)} If a working node $w$ of $\Gamma $ has $m$ leaving edges $%
d_{1},\ldots ,d_{m}$ labeled with the same number and $m\geq 2$, then the
subtrees $\Gamma (d_{1}),\ldots ,\Gamma (d_{m})$ are not full.

\textrm{(c)} If the root $r$ of $\Gamma $ has $m$ leaving edges $%
d_{1},\ldots ,d_{m}$ and $m\geq 2$, then the subtrees $\Gamma (d_{1}),\ldots
,\Gamma (d_{m})$ are not full.
\end{lemma}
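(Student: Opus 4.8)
The plan is to prove all three parts of Lemma~\ref{L2} by minimality arguments analogous to those in Lemma~\ref{L1}, exploiting the fact that $\Gamma$ has the least possible number of nodes among nondeterministic decision trees solving $z$.

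\medskip

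\textbf{Part (a).} This is essentially identical to Lemma~\ref{L1}(a). First I would observe that there is at least one realizable complete path through the root (the universe $A$ is nonempty, so for any object $a\in A$ a complete path $\xi$ with $a\in A(\xi)$ exists by the definition of a nondeterministic decision tree solving $z$, and such a path is realizable). Suppose some node $w\neq r$ has no realizable complete path through it; let $d$ be the edge entering $w$. Removing $d$ and the subtree $\Gamma(d)$ yields a decision tree $\Gamma'$ over $z$. I would check that $\Gamma'$ still solves $z$ nondeterministically: the existence condition is preserved because no object $a$ was routed through $w$ (any complete path of $\Gamma$ through $w$ is unrealizable, hence contributes no object), and the correctness condition is preserved since we only deleted paths. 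Then $L(\Gamma')<L(\Gamma)=L_U^{la}(z)$, contradicting minimality.

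\medskip

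\textbf{Parts (b) and (c).} These are the genuinely new statements and where the main work lies. The idea is: if $w$ is a working node with edges $d_1,\dots,d_m$ ($m\geq 2$) all labeled by the same $\delta\in\{0,1\}$, and some $\Gamma(d_i)$ were full, then I could delete the other subtrees $\Gamma(d_j)$, $j\neq i$, together with their entering edges $d_j$, obtaining a decision tree $\Gamma'$ with strictly fewer nodes. The crux is showing $\Gamma'$ still solves $z$ nondeterministically. For \emph{correctness}, this is automatic — deleting complete paths never introduces a wrong answer. For \emph{existence} (every object lies on some realizable complete path ending with the correct label), I would argue as follows: fix $a\in A$; if $a$ was not routed into $w$ in $\Gamma$, a complete path for $a$ survives in $\Gamma'$; if $a$ reaches $w$ (so $f(a)=\delta$ where $f$ labels $w$), then $a$ enters each $\Gamma(d_i)$. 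Here I use fullness of $\Gamma(d_i)$: by definition, deleting certain edges and subtrees from $\Gamma(d_i)$ leaves a tree $G$ in which every working node has both outgoing edges $0$ and $1$ present and every terminal node of $G$ is terminal in $\Gamma$; hence from the root of $\Gamma(d_i)$ there is a complete path $\eta$ (staying inside $G$) with $a\in A(\eta)$ — at each working node we simply follow the edge matching the attribute value at $a$. Concatenating the path from $r$ to $w$, the edge $d_i$, and $\eta$ gives a realizable complete path of $\Gamma'$ through $a$, and by nondeterministic correctness of the original $\Gamma$ its terminal label is $z(a)$. Thus $\Gamma'$ solves $z$, contradicting $L(\Gamma)=L_U^{la}(z)$. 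Part (c) is the same argument with the root $r$ in place of $w$; since the root's outgoing edges carry no labels, no ``same number'' hypothesis is needed, but otherwise the deletion-and-reroute reasoning is identical.

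\medskip

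\textbf{Anticipated obstacle.} The subtle point is handling the distinction between complete paths and \emph{realizable} complete paths in the fullness argument: a priori the complete path $\eta$ inside $G$ constructed for $a$ might, when concatenated with the prefix ending in $d_i$, fail to be realizable if the prefix's constraints are incompatible with $\eta$'s — but they are not, precisely because $a$ itself witnesses $A(\text{prefix}\cdot d_i\cdot\eta)\neq\emptyset$. A second point to get right is that ``full'' is stated for a subtree $\Gamma(d)$ of $\Gamma$, and after pruning we must ensure the retained $\Gamma(d_i)$ is genuinely a subtree of $\Gamma'$ with its terminal structure intact, which it is since we only removed sibling subtrees hanging off $w$ (resp.\ $r$) and never touched the interior of $\Gamma(d_i)$. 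Once these bookkeeping matters are settled, each part reduces to the standard ``strictly smaller tree solving $z$ contradicts minimality'' template.
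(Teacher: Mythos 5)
Your proposal is correct and follows essentially the same route as the paper: part (a) by the same pruning-of-unrealizable-subtrees argument as in Lemma~\ref{L1}(a), and parts (b), (c) by deleting the sibling subtrees of a retained full subtree and rerouting any affected object through that full subtree (following the edges matching its attribute values, which is possible since the kept tree $G$ has both $0$- and $1$-edges at every working node and the parallel edges share the same label), contradicting minimality of $L(\Gamma)$. The realizability point you flag is handled exactly as in the paper's proof, so no gap remains.
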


\begin{proof}
(a) The proof of item (a) is almost identical to the proof of item (a) of
Lemma \ref{L1}.

(b) Let $w$ be a working node of $\Gamma $, which has $m$ leaving edges $%
d_{1},\ldots ,d_{m}$ labeled with the same number, $m\geq 2$, and at least
one of the subtrees $\Gamma (d_{1}),\ldots ,$ $\Gamma (d_{m})$ is full. For the
definiteness, we assume that $\Gamma (d_{1})$ is full. Remove from $\Gamma $
the edges $d_{2},\ldots ,d_{m}$ and subtrees $\Gamma (d_{2}),\ldots ,\Gamma
(d_{m})$. We now show that the obtained tree $\Gamma ^{\prime }$ solves the
problem $z$ nondeterministically. Assume the contrary. Then there exists an
object $a\in A$ such that, for each complete path $\xi $ with $a\in A(\xi )$%
, the path $\xi $ passes through one of the edges $d_{2},\ldots ,d_{m}$ but
it is not true. Let $\xi $ be a complete path such that $a\in A(\xi )$.
Then, according to the assumption, this path passes through the node $w$.
Let $\xi ^{\prime }$ be the part of this path from the root of $\Gamma $ to
the node $w$. Since the edges $d_{1},\ldots ,d_{m}$ are labeled with the
same number and $\Gamma (d_{1})$ is a full subtree, we can find in $\Gamma
(d_{1})$ the continuation of $\xi ^{\prime }$ to a terminal node of $\Gamma
(d_{1})$ such that the obtained complete path $\xi ^{\prime \prime }$ of $%
\Gamma $ satisfies the condition $a\in A(\xi ^{\prime \prime })$. Hence $%
\Gamma ^{\prime }$ is a decision tree over $z$, which solves the problem $z$
nondeterministically and for which $L(\Gamma ^{\prime })<L(\Gamma )$, but
this is impossible by definition of $\Gamma $.

(c) Item (c) can be proven in the same way as item (b). \qed
\end{proof}

We now consider two statements about classes of decision trees  proved in \cite{Moshkov23}. Let $%
\Gamma $ be a decision tree. We denote by $L_{t}(\Gamma )$ the number of
terminal nodes in $\Gamma $ and by $L_{w}(\Gamma )$ the number of working
nodes in $\Gamma $. It is clear that $L(\Gamma )=1+L_{t}(\Gamma
)+L_{w}(\Gamma )$.

Let $U$ be an infinite binary information systems. We denote by $G_{d}(U)$
the set of all deterministic decision trees over $U$ and by $G_{d}^{2}(U)$
the set of all decision trees from $G_{d}(U)$ such that each working node of
the tree has two leaving edges.

\begin{lemma}[Lemma 14 from \cite{Moshkov23}]
\label{L3}Let $U$ be an infinite binary information system. Then

\textrm{(a)} If $\Gamma \in G_{d}^{2}(U)$, then $L_{w}(\Gamma )=L_{t}(\Gamma
)-1$.

\textrm{(b)} If $\Gamma \in G_{d}(U)\setminus G_{d}^{2}(U)$, then $%
L_{w}(\Gamma )>L_{t}(\Gamma )-1$.
\end{lemma}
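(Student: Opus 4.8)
The plan is to prove both parts by a single double-counting of the edges of $\Gamma$. Since $\Gamma$ is a finite rooted tree, its number of edges is one less than its number of nodes; and every node is either the root, a working node, or a terminal node, so the number of nodes is $1 + L_w(\Gamma) + L_t(\Gamma)$. Hence the number of edges of $\Gamma$ equals $L_w(\Gamma) + L_t(\Gamma)$. The second count is by the tail (source) of each edge: a terminal node has no outgoing edge; the root of a deterministic decision tree has exactly one outgoing edge (by definition); and each working node of a \emph{deterministic} decision tree has its outgoing edges labeled with pairwise distinct numbers from $\{0,1\}$, hence has either one or two outgoing edges.

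For part (a), if $\Gamma \in G_d^2(U)$ then every working node has exactly two outgoing edges, so the second count gives $1 + 2L_w(\Gamma)$ edges. Equating the two counts, $L_w(\Gamma) + L_t(\Gamma) = 1 + 2L_w(\Gamma)$, which rearranges to $L_w(\Gamma) = L_t(\Gamma) - 1$.

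For part (b), if $\Gamma \in G_d(U) \setminus G_d^2(U)$, let $k$ be the number of working nodes having exactly one outgoing edge; then $k \ge 1$ precisely because $\Gamma \notin G_d^2(U)$, and the remaining $L_w(\Gamma) - k$ working nodes each have two outgoing edges. The second count then gives $1 + k + 2(L_w(\Gamma) - k) = 1 + 2L_w(\Gamma) - k$ edges, and equating with $L_w(\Gamma) + L_t(\Gamma)$ yields $L_w(\Gamma) = L_t(\Gamma) - 1 + k > L_t(\Gamma) - 1$, as required.

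I do not expect a genuine obstacle here: the statement is purely combinatorial and uses nothing about the information system $U$ beyond the definitions of a deterministic decision tree and of working and terminal nodes. The only points needing a little care are to keep the root separate from the working nodes in the edge count (the root contributes one outgoing edge but is counted neither among the working nor among the terminal nodes), and to invoke determinism exactly where it is used, namely to guarantee that every working node has out-degree at most $2$.
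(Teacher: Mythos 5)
Your double-counting argument is correct: equating the two edge counts (edges $=$ nodes $-1 = L_w(\Gamma)+L_t(\Gamma)$, versus edges counted by source as $1 + $ out-degrees of working nodes) gives exactly (a) when every working node has out-degree $2$ and the strict inequality in (b) when some working node has smaller out-degree, with determinism used only to cap the out-degree of working nodes at $2$. Note, however, that the paper does not prove this statement at all: it is imported verbatim as Lemma 14 from \cite{Moshkov23}, so there is no in-paper proof to compare against; your self-contained elementary argument is a perfectly adequate substitute for the citation. One small point of care: you treat ``not in $G_{d}^{2}(U)$'' as meaning some working node has exactly one leaving edge, which tacitly uses the convention that working nodes (being non-terminal) have at least one leaving edge; if one allowed a working node with no leaving edges, the same count gives $L_w(\Gamma)=L_t(\Gamma)-1+2k_0+k_1$ with $k_0,k_1$ the numbers of working nodes of out-degree $0$ and $1$, so the conclusion of (b) is unaffected.
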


We denote by $G_{a}^{f}(U)$ the set of all decision trees $\Gamma $ over $U$
that satisfy the following conditions: (i) if a working node of $\Gamma $
has $m$ leaving edges $d_{1},\ldots ,d_{m}$ labeled with the same number and
$m\geq 2$, then the subtrees $\Gamma (d_{1}),\ldots ,\Gamma (d_{m})$ are not
full, and (ii) if the root of $\Gamma $ has $m$ leaving edges $d_{1},\ldots
,d_{m}$ and $m\geq 2$, then the subtrees $\Gamma (d_{1}),\ldots ,\Gamma
(d_{m})$ are not full. One can show that $G_{d}^{2}(U)\subseteq
G_{d}(U)\subseteq G_{a}^{f}(U)$.

\begin{lemma}[Lemma 15 from \cite{Moshkov23}]
\label{L4}Let $U$ be an infinite binary information system. If $\Gamma \in $
$G_{a}^{f}(U)\setminus G_{d}^{2}(U)$, then $L_{w}(\Gamma )>L_{t}(\Gamma )-1$.
\end{lemma}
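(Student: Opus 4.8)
The plan is to prove a sharper, self-contained statement by induction on subtrees and then read off the lemma from the behaviour at the root of $\Gamma$. Call a subtree $\Gamma(d)$ \emph{complete} if every working node occurring in it has exactly two leaving edges, one labeled $0$ and one labeled $1$. Two facts about complete subtrees are immediate (the first by induction on the subtree, the second by the standard count for full binary trees, as in Lemma~\ref{L3}(a)): a complete subtree is full in the sense defined above, and a complete subtree $S$ satisfies $L_{t}(S)=L_{w}(S)+1$. Note also that a single terminal node is a complete (hence full) subtree; consequently condition (i) in the definition of $G_{a}^{f}(U)$ implies that no two equally labeled edges leaving a working node can both lead to full subtrees, and in particular a working node cannot have two equally labeled edges both entering complete subtrees.

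The heart of the proof is the following claim, which I would prove by induction on the number of nodes of $S$: \emph{if $\Gamma$ is a decision tree over $U$ all of whose working nodes satisfy condition (i), and $S=\Gamma(d)$ is a subtree of $\Gamma$ that is not complete, then $L_{w}(S)\ge L_{t}(S)$.} If the root $w$ of $S$ is a terminal node then $S$ is complete, so there is nothing to prove; hence $w$ is a working node, with children subtrees $S_{1},\dots,S_{p}$. Since $L_{w}(S)=1+\sum_{j}L_{w}(S_{j})$ and $L_{t}(S)=\sum_{j}L_{t}(S_{j})$, the desired inequality is equivalent to $\sum_{j}\bigl(L_{t}(S_{j})-L_{w}(S_{j})\bigr)\le 1$. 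By the induction hypothesis, $L_{t}(S_{j})-L_{w}(S_{j})\le 0$ for every non-complete $S_{j}$, while $L_{t}(S_{j})-L_{w}(S_{j})=1$ for every complete $S_{j}$; so it suffices to show that at most one of $S_{1},\dots,S_{p}$ is complete. If $S_{a}$ and $S_{b}$ were both complete, they could not be reached by edges with the same label, since condition (i) would then force them to be non-full, contradicting completeness; so one is reached by a $0$-edge and the other by a $1$-edge. If $p=2$ this makes $S$ itself complete, contrary to assumption; and if $p\ge 3$, a third child's label equals that of $S_{a}$ or of $S_{b}$, so condition (i) again contradicts completeness of one of them. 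This proves the claim.

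Now let $\Gamma\in G_{a}^{f}(U)\setminus G_{d}^{2}(U)$, let $d_{1},\dots,d_{k}$ be the edges leaving its root, and note that the working nodes of $\Gamma$ satisfy condition (i), so the claim applies to every subtree $\Gamma(d_{i})$. If $k\ge 2$, condition (ii) makes each $\Gamma(d_{i})$ non-full, hence non-complete, and summing the claim over $i$ gives $L_{w}(\Gamma)=\sum_{i}L_{w}(\Gamma(d_{i}))\ge\sum_{i}L_{t}(\Gamma(d_{i}))=L_{t}(\Gamma)>L_{t}(\Gamma)-1$. If $k=1$, then $\Gamma\notin G_{d}^{2}(U)$ forces $\Gamma(d_{1})$ to be non-complete, since a complete $\Gamma(d_{1})$ under a single root edge would make every working node of $\Gamma$ have one $0$-edge and one $1$-edge and thus put $\Gamma$ in $G_{d}^{2}(U)$; the claim then gives $L_{w}(\Gamma)=L_{w}(\Gamma(d_{1}))\ge L_{t}(\Gamma(d_{1}))=L_{t}(\Gamma)$. (The degenerate case $k=0$ gives $L_{w}(\Gamma)=L_{t}(\Gamma)=0$.) In every case $L_{w}(\Gamma)>L_{t}(\Gamma)-1$.

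I expect the main subtlety to lie in the formulation of the induction rather than in any single computation: a subtree of a tree in $G_{a}^{f}(U)$ need not itself belong to $G_{a}^{f}(U)$, because its root can be a working node with many children while condition (ii) constrains only the root of the whole tree. This is why the inductive claim is phrased purely in terms of condition (i), which \emph{is} inherited by every subtree, with condition (ii) invoked exactly once, at the root of $\Gamma$, and why it is the notion of being \emph{complete} — rather than being deterministic or lying in $G_{d}^{2}(U)$ — that must be tracked through the recursion. As a cross-check one can also argue by pruning: delete all but one edge in each bundle of equally labeled edges at a working node and all but one edge at the root; each deleted subtree is non-full, hence satisfies $L_{w}\ge L_{t}$ by the claim, so pruning does not decrease $L_{w}-L_{t}$, and the resulting deterministic tree still contains a non-full subtree and hence lies in $G_{d}(U)\setminus G_{d}^{2}(U)$, to which Lemma~\ref{L3}(b) applies.
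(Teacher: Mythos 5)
The paper does not actually prove this lemma --- it is imported verbatim as Lemma~15 of \cite{Moshkov23} --- so there is no in-paper argument to compare yours against; I can only assess your proof on its own terms, and it is correct. The inductive invariant is well chosen: ``complete'' (every working node in the subtree has exactly two edges, labeled $0$ and $1$) implies ``full'' with $m=0$ deleted edges, so conditions (i) and (ii) convert directly into ``not complete,'' and the count $L_{w}(S)\ge L_{t}(S)$ for non-complete subtrees follows from the observation that at most one child of a working node can be complete (two complete children on equally labeled edges violate (i); two on differently labeled edges either make the parent complete when $p=2$ or force a label collision with a third edge when $p\ge 3$). The split at the root --- condition (ii) when $k\ge 2$, and the implication ``$\Gamma(d_{1})$ complete $\Rightarrow\Gamma\in G_{d}^{2}(U)$'' when $k=1$ --- uses the hypothesis $\Gamma\notin G_{d}^{2}(U)$ exactly where it is needed. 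You are also right that the induction must be phrased in terms of condition (i) alone, since membership in $G_{a}^{f}(U)$ is not inherited by subtrees.

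One small caveat: the closing ``cross-check'' by pruning is not airtight as stated. The direction of ``pruning does not decrease $L_{w}-L_{t}$'' is backwards (deleting subtrees with $L_{w}\ge L_{t}$ shows $L_{w}(\Gamma)-L_{t}(\Gamma)\ge L_{w}(\Gamma')-L_{t}(\Gamma')$, i.e., the difference does not \emph{increase} when passing to the pruned tree $\Gamma'$), and the pruned tree can land in $G_{d}^{2}(U)$ (e.g., if the only defect of $\Gamma$ was a working node with two equally labeled edges), in which case Lemma~\ref{L3}(b) is not applicable and one would only recover $L_{w}(\Gamma)\ge L_{t}(\Gamma)-1$ without a further strictness argument. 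Since this paragraph is explicitly a sanity check and the inductive argument is self-contained and complete, this does not affect the validity of the proof; I would simply drop or repair that remark.
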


Let $U=(A,F)$ be an infinite binary information system. For $f_{1},\ldots
,f_{n}\in F$, we denote by $N_{U}(f_{1},\ldots ,f_{n})$ the number of $n$%
-tuples $(\delta _{1},\ldots ,\delta _{n})\in \{0,1\}^{n}$ for which the
system of equations
\begin{equation*}
\{f_{1}(x)=\delta _{1},\ldots ,f_{n}(x)=\delta _{n}\}
\end{equation*}%
has a solution from $A$. For $n\in \mathbb{N}$, denote
\begin{equation*}
N_{U}(n)=\max \{N_{U}(f_{1},\ldots ,f_{n}):f_{1},\ldots ,f_{n}\in F\}.
\end{equation*}

It is clear that, for any $m,n\in \mathbb{N}$, if $m\leq n$ then $%
N_{U}(m)\leq N_{U}(n)$.

\begin{proposition}
\label{P5} Let $U=(A,F)$ be an infinite binary information system. Then, for
any $n\in \mathbb{N}$,
\begin{equation*}
L_{U}^{la}(n)=L_{U}^{ld}(n)=2N_{U}(n).
\end{equation*}
\end{proposition}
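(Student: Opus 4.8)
The plan is to prove the chain of equalities by establishing the two outer bounds separately: first that $L_U^{ld}(n) \le 2N_U(n)$ and $L_U^{la}(n) \le 2N_U(n)$, and then that $L_U^{la}(n) \ge 2N_U(n)$; since Proposition~\ref{P4} already gives $L_U^{la}(n) = L_U^{ld}(n)$, and trivially $L_U^{la}(n) \le L_U^{ld}(n)$ (a deterministic tree is a nondeterministic one), collapsing these inequalities yields the claim. Actually, it is cleaner to invoke Proposition~\ref{P4} at the end: it suffices to show $L_U^{ld}(n) \le 2N_U(n)$ and $L_U^{la}(n) \ge 2N_U(n)$.

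For the upper bound $L_U^{ld}(n) \le 2N_U(n)$, I would fix a problem $z = (\nu, f_1, \ldots, f_n)$ over $U$ with $\dim z \le n$ and build a deterministic decision tree for it explicitly. Starting from the root, query $f_1, f_2, \ldots$ in order, but at each node only create the branch for a value $\delta_i$ if the system accumulated so far together with $f_i(x) = \delta_i$ is still compatible on $A$; incompatible branches are simply not created. Every complete path then corresponds to a realizable tuple $(\delta_1, \ldots, \delta_n)$, and $z$ is constant on $A(\xi)$ because $z(x) = \nu(f_1(x), \ldots, f_n(x))$ is determined once all $f_i$ values are fixed, so we can label the terminal node correctly; thus the tree solves $z$ deterministically. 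The number of complete paths (terminal nodes) is exactly $N_U(f_1, \ldots, f_n) \le N_U(n)$. This tree is in $G_d^2(U)$ after pruning nodes with a single child — or more simply, I can argue the full tree with all realizable branches kept has $L_t = N_U(f_1,\ldots,f_n)$ terminal nodes; by Lemma~\ref{L3}(a) a tree in $G_d^2(U)$ has $L_w = L_t - 1$ working nodes, hence $L = 1 + L_t + L_w = 2L_t \le 2N_U(n)$. One subtlety: the greedy tree as described may have working nodes with only one child (when one of the two extensions is incompatible), so I should either prune such nodes — which only decreases $L$ and preserves correctness, as in Lemma~\ref{L1}(b) — and check the pruned tree lies in $G_d^2(U)$, or directly bound $L \le 2L_t$ for any deterministic tree whose every complete path is realizable. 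Taking the maximum over $z$ gives $L_U^{ld}(n) \le 2N_U(n)$.

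For the lower bound $L_U^{la}(n) \ge 2N_U(n)$, I would choose attributes $f_1, \ldots, f_n \in F$ achieving $N_U(f_1, \ldots, f_n) = N_U(n)$, and define the problem $z = (\nu, f_1, \ldots, f_n)$ where $\nu$ is chosen so that $z$ genuinely distinguishes the $N_U(n)$ realizable tuples — for instance, let $\nu$ assign distinct values to distinct realizable tuples (or enough distinct values to force the tree to be large). Let $\Gamma$ be a nondeterministic decision tree over $z$ with $L(\Gamma) = L_U^{la}(z)$. By Lemma~\ref{L2}, $\Gamma \in G_a^f(U)$ and every complete path is realizable. I claim $\Gamma$ must have at least $N_U(n)$ terminal nodes: for each realizable tuple $(\delta_1, \ldots, \delta_n)$ there is an object $a \in A$ with $f_i(a) = \delta_i$, and by the solving condition some complete path $\xi$ has $a \in A(\xi)$ with terminal label $z(a) = \nu(\delta_1, \ldots, \delta_n)$; distinct tuples with distinct $\nu$-values force distinct terminal nodes. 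Hence $L_t(\Gamma) \ge N_U(n)$. Then by Lemma~\ref{L4} (or Lemma~\ref{L3}(a) if $\Gamma \in G_d^2(U)$), $L_w(\Gamma) \ge L_t(\Gamma) - 1$, so $L(\Gamma) = 1 + L_t(\Gamma) + L_w(\Gamma) \ge 2L_t(\Gamma) \ge 2N_U(n)$, giving $L_U^{la}(n) \ge L_U^{la}(z) \ge 2N_U(n)$.

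The main obstacle I anticipate is the lower-bound argument that $L_t(\Gamma) \ge N_U(n)$: one must be careful that the chosen $\nu$ actually forces $N_U(n)$ distinct terminal nodes even though different objects realizing the same tuple might travel along different paths in a nondeterministic tree, and that the tree cannot "reuse" a terminal node for two tuples receiving different decisions. Choosing $\nu$ injective on the realizable tuples handles this cleanly: two distinct realizable tuples get different decision values, so no single terminal node (which bears one fixed label) can serve objects from both, forcing at least $N_U(f_1,\ldots,f_n)$ terminal nodes. A secondary point requiring care is confirming that $\Gamma$ as in Lemma~\ref{L2} indeed lies in $G_a^f(U)$ so that Lemma~\ref{L4} applies — this is exactly what items (b) and (c) of Lemma~\ref{L2} assert, together with the remark that $G_d^2(U) \subseteq G_a^f(U)$, so the inequality $L_w(\Gamma) \ge L_t(\Gamma) - 1$ holds in all cases.
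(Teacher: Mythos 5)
Your proposal is correct and follows essentially the paper's argument. The lower bound is the same as the paper's: choose attributes with $N_U(f_1,\ldots,f_n)=N_U(n)$, take $\nu$ injective on tuples, take a node-minimal nondeterministic tree, use Lemma~\ref{L2} to place it in $G_a^f(U)$ with all paths realizable, bound $L_t(\Gamma)\ge N_U(n)$ via the distinct decisions, and apply Lemmas~\ref{L3} and~\ref{L4} to get $L_w(\Gamma)\ge L_t(\Gamma)-1$, hence $L(\Gamma)\ge 2N_U(n)$. For the upper bound the paper proceeds a bit differently: rather than building a greedy tree and pruning, it takes a deterministic tree with $L(\Gamma)=L_U^{ld}(z)$ and invokes Lemma~\ref{L1} to conclude that every working node has two leaving edges and every node lies on a realizable complete path, then bounds $L_t(\Gamma)\le N_U(f_1,\ldots,f_m)\le N_U(n)$ (distinct realizable complete paths of a deterministic tree give distinct value tuples) and applies Lemma~\ref{L3}(a); your explicit construction followed by pruning yields the same bound and is equally valid. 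Two cautions. First, do not lean on Proposition~\ref{P4}: in the paper it is deduced from the present proposition, so citing it here would be circular; fortunately your three inequalities $L_U^{ld}(n)\le 2N_U(n)$, $L_U^{la}(n)\le L_U^{ld}(n)$, and $L_U^{la}(n)\ge 2N_U(n)$ already close the chain without it. Second, your parenthetical alternative ``directly bound $L\le 2L_t$ for any deterministic tree whose every complete path is realizable'' is false as stated, since a chain of one-child working nodes ending in a single terminal node has $L_t=1$ but arbitrarily many nodes; the two-leaving-edges condition is essential, so keep the pruning step (or argue from a minimal tree via Lemma~\ref{L1}, as the paper does).
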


\begin{proof}
Let $z=(\nu ,f_{1},\ldots ,f_{m})$ be a problem over $U$ and $m\leq n$. Let $%
\Gamma $ be a decision tree over $z$, which solves the problem $z$
deterministically and for which $L(\Gamma )=L_{U}^{ld}(z)$. From Lemma \ref{L1} it follows that each working node of $\Gamma $ has two edges leaving this node and, for each
node of $\Gamma $, there exists a realizable complete path that passes
through this node. Let $\xi _{1}$ and $\xi _{2}$ be different complete paths
in $\Gamma $, $a_{1}\in A(\xi _{1})$, and $a_{2}\in A(\xi _{2})$. It is easy
to show that $(f_{1}(a_{1}),\ldots ,f_{m}(a_{1}))\neq (f_{1}(a_{2}),\ldots
,f_{m}(a_{2}))$. Therefore $L_{t}(\Gamma )\leq N_{U}(f_{1},\ldots
,f_{m})\leq N_{U}(n)$. It is clear that $\Gamma \in G_{d}^{2}(U)$. By Lemma %
\ref{L3}, $L_{w}(\Gamma )=L_{t}(\Gamma )-1$. Hence $L(\Gamma )\leq 2N_{U}(n)$%
. Taking into account that $z$ is an arbitrary problem over $U$ with $\dim
z\leq n$ we obtain
\begin{equation*}
L_{U}^{ld}(n)\leq 2N_{U}(n).
\end{equation*}

Since any decision tree solving the problem $z$ deterministically solves it
nondeterministically we obtain%
\begin{equation*}
L_{U}^{la}(n)\leq L_{U}^{ld}(n).
\end{equation*}%
We now show that $2N_{U}(n)\leq L_{U}^{la}(n)$. Let us consider a problem $%
z=(\nu ,f_{1},\ldots ,f_{n})$ over $U$ such that
\begin{equation*}
N_{U}(f_{1},\ldots ,f_{n})=N_{U}(n)
\end{equation*}%
and, for any $\bar{\delta}_{1},\bar{\delta}_{2}\in \{0,1\}^{n}$, if $\bar{%
\delta}_{1}\neq \bar{\delta}_{2}$, then $\nu (\bar{\delta}_{1})\neq \nu (%
\bar{\delta}_{2})$. Let $\Gamma $ be a decision tree over $z$, which solves
the problem $z$ nondeterministically and for which $L(\Gamma )=L_{U}^{la}(z)$%
. By Lemma \ref{L2}, $\Gamma \in G_{a}^{f}(U)$. Using Lemmas \ref{L3} and %
\ref{L4} we obtain $L_{w}(\Gamma )\geq L_{t}(\Gamma )-1$. It is clear that $%
L_{t}(\Gamma )\geq N_{U}(f_{1},\ldots ,f_{n})=N_{U}(n)$. Therefore $L(\Gamma
)\geq 2N_{U}(n)$, $L_{U}^{la}(z)\geq 2N_{U}(n)$, and $L_{U}^{la}(n)\geq
2N_{U}(n)$. \qed
\end{proof}

The next statement follows directly from Lemmas 5.1 and 5.2 \cite{Moshkov05}
and the evident inequality $N_{U}(n)\leq 2^{n}$, which is true for any
infinite binary information system $U$. The proof of Lemma 5.1 from \cite%
{Moshkov05} is based on Theorems 4.6 and 4.7 from the same monograph that
are similar to results obtained in \cite{Sauer72,Shelah72}.

\begin{proposition}
\label{P6} For any infinite binary information system $U$, the function $%
N_{U}(n)$ has one of the following two types of behavior:

\textrm{(POL)} If the system $U$ has finite I-dimension, then for any $n\in
\mathbb{N}$,
\begin{equation*}
n+1\leq N_{U}(n)\leq (4n)^{I(U)}.
\end{equation*}

\textrm{(EXP)} If the system $U$ has infinite I-dimension, then for any $%
n\in \mathbb{N}$,
\begin{equation*}
N_{U}(n)=2^{n}.
\end{equation*}
\end{proposition}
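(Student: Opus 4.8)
The plan is to derive the dichotomy from the Sauer--Shelah lemma after re-interpreting $N_{U}(n)$ as the shatter function of a set system. Fix $f_{1},\dots ,f_{n}\in F$ and to each object $a\in A$ associate the set $S_{a}=\{\,i\in\{1,\dots ,n\}:f_{i}(a)=1\,\}$. A tuple $(\delta _{1},\dots ,\delta _{n})\in\{0,1\}^{n}$ is realizable (the system $\{f_{i}(x)=\delta _{i}\}$ has a solution in $A$) exactly when $S_{a}=\{\,i:\delta _{i}=1\,\}$ for some $a$, so $N_{U}(f_{1},\dots ,f_{n})$ is the number of distinct sets in $\{S_{a}:a\in A\}$. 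Furthermore, a set $J\subseteq\{1,\dots ,n\}$ is shattered by this system precisely when $\{f_{i}:i\in J\}$ is independent in the sense of the earlier definition, so the VC dimension of $\{S_{a}:a\in A\}$ is at most $I(U)$.

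For the \textrm{(POL)} case, put $d=I(U)<\infty$ (note $d\ge 1$, since an infinite $F$ cannot consist solely of constant functions on $A$). The Sauer--Shelah lemma then yields $N_{U}(f_{1},\dots ,f_{n})\le\sum_{i=0}^{d}\binom{n}{i}$ for every choice of the attributes, and the crude estimate $\binom{n}{i}\le n^{i}\le n^{d}$ for $1\le i\le d$ and $n\ge 1$ turns this into $N_{U}(n)\le (d+1)n^{d}\le (4n)^{d}$. For the lower bound $N_{U}(n)\ge n+1$ I would build a sequence $f_{1},f_{2},\dots$ of attributes with $N_{U}(f_{1},\dots ,f_{k})\ge k+1$ by induction on $k$: take $f_{1}$ to be any non-constant attribute; given $f_{1},\dots ,f_{k}$, every realizable $k$-tuple extends, under any further attribute, to at least one realizable $(k+1)$-tuple, and distinct $k$-tuples extend to distinct ones, so it is enough to find $f_{k+1}\in F$ for which some realizable $k$-tuple splits (both its $0$- and $1$-extensions realizable), which raises the count by at least one. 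If no attribute of $F$ did this, then each $f\in F$ would, on $A$, be a fixed Boolean function of $(f_{1}(x),\dots ,f_{k}(x))$; there are only finitely many such Boolean functions, contradicting $|F|=\infty$.

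For the \textrm{(EXP)} case, $I(U)=\infty$ means that for each $n$ there is an independent subset $\{f_{1},\dots ,f_{n}\}\subseteq F$, i.e.\ all $2^{n}$ tuples of $\{0,1\}^{n}$ are realizable for these attributes and hence $N_{U}(n)\ge 2^{n}$; combined with the trivial inequality $N_{U}(n)\le 2^{n}$ this gives $N_{U}(n)=2^{n}$.

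I expect the only substantive step to be the Sauer--Shelah estimate itself (it is available as Theorems 4.6 and 4.7 of \cite{Moshkov05}), so the real work is the two translations in the first paragraph --- identifying $N_{U}$ with a shatter function and $I(U)$ with VC dimension --- after which the bounds are exactly Lemmas 5.1 and 5.2 of \cite{Moshkov05}. The points that need care are the crude polynomial estimate together with its edge cases ($d\ge 1$, small $n$), and, on the lower-bound side, making precise the ``finitely many Boolean functions of $k$ inputs'' counting argument that forces a splitting attribute to exist.
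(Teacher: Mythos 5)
Your proposal is correct and follows essentially the same route as the paper, which simply delegates this statement to Lemmas 5.1 and 5.2 of \cite{Moshkov05} (themselves based on Sauer--Shelah-type bounds, Theorems 4.6 and 4.7 there) together with the trivial bound $N_{U}(n)\leq 2^{n}$. Your translation of $N_{U}$ into a shatter function and of $I(U)$ into VC dimension, the estimate $\sum_{i=0}^{d}\binom{n}{i}\leq (d+1)n^{d}\leq (4n)^{d}$ with $d=I(U)\geq 1$, and the splitting-attribute induction for $N_{U}(n)\geq n+1$ correctly reconstruct the content of those cited lemmas, so no gap remains.
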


We now prove Propositions \ref{P3} and \ref{P4}.

\begin{proof}[Proof of Proposition \protect\ref{P3}]
The statement of the proposition follows immediately from Propositions \ref%
{P5} and \ref{P6}. \qed
\end{proof}

\begin{proof}[Proof of Proposition \protect\ref{P4}]
The statement of the proposition follows immediately from Proposition \ref%
{P5}. \qed
\end{proof}

\section{Proof of Theorem \protect\ref{T1}}

\label{S4}

First, we prove five auxiliary statements.

\begin{lemma}
\label{L6.0}Let $U=(A,F)$ be an infinite binary information system, which
has infinite I-dimension. Then $U$ does not satisfy the condition of
reduction.
\end{lemma}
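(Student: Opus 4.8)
The plan is to prove the contrapositive implicitly by showing directly that infinite I-dimension forces the failure of the condition of reduction. Recall that $U=(A,F)$ satisfies the condition of reduction with parameter $m$ if every compatible system of equations over $F$ has a subsystem with at most $m$ equations having the same solution set in $A$. So to show $U$ does \emph{not} satisfy the condition of reduction, I must show: for every $m\in\mathbb{N}$ there is a compatible system $\{f_{1}(x)=\delta_{1},\ldots,f_{r}(x)=\delta_{r}\}$ such that no subsystem with at most $m$ equations has the same solution set in $A$.

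First I would fix an arbitrary $m\in\mathbb{N}$ and invoke the hypothesis that $I(U)=\infty$: there exists an independent subset $\{f_{1},\ldots,f_{m+1}\}\subseteq F$ of cardinality $m+1$. By definition of independence, for every tuple $(\delta_{1},\ldots,\delta_{m+1})\in\{0,1\}^{m+1}$ the system $\{f_{1}(x)=\delta_{1},\ldots,f_{m+1}(x)=\delta_{m+1}\}$ has a solution in $A$. In particular, take $\delta_{1}=\cdots=\delta_{m+1}=0$; the system $S=\{f_{1}(x)=0,\ldots,f_{m+1}(x)=0\}$ is compatible on $A$, and it has $r=m+1$ equations. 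This $S$ is the candidate witness.

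Next I would argue that no proper sub-collection of at most $m$ of these equations has the same solution set as $S$. Consider any subsystem $S'$ obtained by dropping at least one equation, say we drop the equation $f_{i}(x)=0$; so $S'\subseteq\{f_{j}(x)=0 : j\neq i\}$ has at most $m$ equations. I must produce an $a\in A$ satisfying $S'$ but not $S$. Using independence of $\{f_{1},\ldots,f_{m+1}\}$ again, choose the tuple with $\delta_{i}=1$ and $\delta_{j}=0$ for all $j\neq i$: there is $a\in A$ with $f_{i}(a)=1$ and $f_{j}(a)=0$ for $j\neq i$. Then $a$ satisfies every equation $f_{j}(x)=0$ with $j\neq i$, hence satisfies $S'$, but $f_{i}(a)=1\neq 0$, so $a$ does not satisfy $S$. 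Therefore $S'$ and $S$ have different solution sets in $A$. Since every subsystem of $S$ with at most $m$ equations omits at least one equation $f_{i}(x)=0$ (because $S$ has $m+1$ equations), the argument applies to all of them. Hence $U$ does not satisfy the condition of reduction with parameter $m$; as $m$ was arbitrary, $U$ does not satisfy the condition of reduction at all.

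There is no real obstacle here: the argument is a direct unwinding of the two definitions, with the only mild subtlety being the need to handle the case where the attributes $f_{1},\ldots,f_{m+1}$ might coincide with other members of the dropped-equation set — but this does not arise since independence is a property of the listed tuple and each $f_i(x)=0$ appears once in $S$. One should also note the edge case $m+1$ could in principle clash if some $f_i = f_j$, but independence of a set of cardinality $m+1$ presupposes the $m+1$ attributes are distinct as elements of $F$, so the system $S$ genuinely has $m+1$ distinct equations and every at-most-$m$ subsystem is proper. This completes the proof.
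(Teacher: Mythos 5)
Your proposal is correct and follows essentially the same route as the paper's proof: take an independent set $\{f_{1},\ldots,f_{m+1}\}$, form the compatible system $\{f_{1}(x)=0,\ldots,f_{m+1}(x)=0\}$, and observe that dropping any equation changes the solution set, so no parameter $m$ works. The only difference is presentational -- the paper argues by contradiction and leaves the ``proper subsystem has a different solution set'' step as clear, whereas you spell it out explicitly via the tuple with $\delta_{i}=1$, which is exactly the intended justification.
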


\begin{proof}
Let us assume the contrary: $U$ satisfies the condition of reduction. Then $%
U $ satisfies the condition of reduction with parameter $m$ for some $m\in
\mathbb{N}$. Since $I(U)=\infty $, there exists an independent subset $%
\{f_{1},\ldots ,f_{m+1}\}$ of the set $F$. It is clear that the system of
equations%
\begin{equation*}
S=\{f_{1}(x)=0,\ldots ,f_{m+1}(x)=0\}
\end{equation*}%
is compatible on $A$ and each proper subsystem of the system $S$ has the set
of solutions different from the set of solutions of $S$. Therefore $U$ does
not satisfy the condition of reduction with parameter $m$. \qed
\end{proof}

\begin{lemma}
\label{L5}For any infinite binary information system, its local type coincides
with one of the rows of Table \ref{tab1}.
\end{lemma}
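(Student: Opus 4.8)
The plan is to combine the four Propositions from the excerpt, which pin down the individual behavior of each of the four functions, and then argue that only the three combinations listed in Table~\ref{tab1} are consistent. The key structural fact is that the dichotomies governing the four functions are not independent: by Propositions~\ref{P1} and \ref{P2}, both $h_U^{ld}$ and $h_U^{la}$ are governed by a single condition --- whether $U$ satisfies the condition of reduction --- and by Propositions~\ref{P3}, \ref{P4} (equivalently Proposition~\ref{P5}), the two functions $L_U^{ld}$ and $L_U^{la}$ are literally equal, so they always have the same type, governed by whether $I(U)$ is finite or infinite. Thus the only freedom is a pair of binary choices: (reduction / no reduction) $\times$ (finite I-dimension / infinite I-dimension).

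First I would make this explicit. If $U$ satisfies the condition of reduction, then by Proposition~\ref{P1}, $h_U^{ld}$ is of type $\mathrm{LOG}$, and by Proposition~\ref{P2}, $h_U^{la}$ is of type $\mathrm{CON}$. If $U$ does not satisfy the condition of reduction, then both are of type $\mathrm{LIN}$. Separately, if $I(U)<\infty$ then Propositions~\ref{P3} and \ref{P4} give that $L_U^{ld}$ and $L_U^{la}$ are both of type $\mathrm{POL}$; if $I(U)=\infty$ they are both of type $\mathrm{EXP}$. A priori this yields four combinations for the tuple: $(\mathrm{LOG},\mathrm{CON},\mathrm{POL},\mathrm{POL})$, $(\mathrm{LOG},\mathrm{CON},\mathrm{EXP},\mathrm{EXP})$, $(\mathrm{LIN},\mathrm{LIN},\mathrm{POL},\mathrm{POL})$, and $(\mathrm{LIN},\mathrm{LIN},\mathrm{EXP},\mathrm{EXP})$ --- which are rows $1$, (a spurious row), $2$, and $3$ of the desired table.

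The crucial step --- and the main obstacle --- is ruling out the spurious combination $(\mathrm{LOG},\mathrm{CON},\mathrm{EXP},\mathrm{EXP})$: one must show that an information system with $I(U)=\infty$ cannot satisfy the condition of reduction. This is exactly the content of Lemma~\ref{L6.0}, which precedes this lemma in the excerpt and which I may assume: infinite I-dimension implies failure of the condition of reduction. Hence whenever the $L$-functions are of type $\mathrm{EXP}$, the $h$-functions must be of type $\mathrm{LIN}$, eliminating row $2$ of the four and leaving precisely rows $1$, $2$, $3$ of Table~\ref{tab1}. This proves the first assertion of Lemma~\ref{L5}, that every infinite binary information system has a local type appearing as one of the three rows.

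I would present the argument as a short case analysis: introduce the two dichotomies, invoke the four propositions to fill in the tuple entry by entry in each case, use Lemma~\ref{L6.0} to discard the impossible case, and conclude. I should note that Lemma~\ref{L5} as stated only claims that every system's local type is \emph{among} the three rows; the complementary realizability claim (that each row is attained by some system) is presumably handled by the examples referenced elsewhere in the paper (the systems $U(m,t)$, $U(m)$, and linear systems from the earlier examples), and I would defer to those rather than construct witnesses here. The only genuine content beyond bookkeeping is the use of Lemma~\ref{L6.0}; everything else is assembling statements already proved.
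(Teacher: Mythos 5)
Your proposal is correct and follows essentially the same route as the paper: both arguments reduce the question to the two dichotomies (condition of reduction, finiteness of $I(U)$), use Lemma~\ref{L6.0} to exclude the combination of infinite I-dimension with the condition of reduction, and then read off the types of the four functions from Propositions~\ref{P1}--\ref{P4}. Your closing observation is also accurate: Lemma~\ref{L5} only asserts membership in the table, and realizability of each row is handled separately by the example systems in Lemmas~\ref{L6.1}--\ref{L6.3}.
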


\begin{proof}
To prove this statement we fill Table \ref{tab2}. In the first column
\textquotedblleft I-dim.\textquotedblright\ we have either \textquotedblleft
Fin\textquotedblright\ or \textquotedblleft Inf\textquotedblright :
\textquotedblleft Fin\textquotedblright\ if the considered information
system has finite I-dimension and \textquotedblleft Inf\textquotedblright\
if the considered information system has infinite I-dimension. In the second
column \textquotedblleft Reduct.\textquotedblright , we have either
\textquotedblleft Yes\textquotedblright\ or \textquotedblleft
No\textquotedblright : \textquotedblleft Yes\textquotedblright\ if the
considered information system satisfies the condition of reduction and
\textquotedblleft No\textquotedblright\ otherwise.

By Lemma \ref{L6.0}, if an information system has infinite I-dimension, then
this information system does not satisfy the condition of reduction. It
means that there are only three possible tuples of values of the considered
two parameters of information systems, which correspond to the three rows of
Table \ref{tab2}. The values of the considered two parameters define the
types of behavior of functions $h_{U}^{ld}(n)$, $h_{U}^{la}(n)$, $%
L_{U}^{ld}(n)$, and $L_{U}^{la}(n)$ according to Propositions \ref{P1}-\ref%
{P4}. We see that the set of possible tuples of values in the last four
columns coincides with the set of rows of Table \ref{tab1}. \qed
\end{proof}

\begin{table}[h]
\centering
\caption{Parameters and local types of infinite binary information systems}
\begin{tabular}{|ll|llll|}
\hline I-dim. & Reduct. & $h_{U}^{ld}(n)$ & $h_{U}^{la}(n)$ & $%
L_{U}^{ld}(n)$ & $L_{U}^{la}(n)$ \\
\hline Fin & Yes & $\mathrm{LOG}$ & $\mathrm{CON}$ & $\mathrm{POL}$ & $%
\mathrm{POL}$ \\
Fin & No & $\mathrm{LIN}$ & $\mathrm{LIN}$ & $\mathrm{POL}$ & $\mathrm{POL}$
\\
Inf & No & $\mathrm{LIN}$ & $\mathrm{LIN}$ & $\mathrm{EXP}$ & $\mathrm{EXP}$
\\
\hline 
\end{tabular}
  \label{tab2}
\end{table}

For each row of Table \ref{tab1}, we consider an example of infinite binary
information system, whose local type coincides with this row.

For any $i\in \mathbb{N}$, we define two functions $p_{i}:\mathbb{N}%
\rightarrow \{0,1\}$ and $l_{i}:\mathbb{N}\rightarrow \{0,1\}$. Let $j\in
\mathbb{N}$. Then $p_{i}(j)=1$ if and only if $j=i$, and $l_{i}(j)=1$ if and
only if $j>i$.

Define an information system $U_{1}=(A_{1},F_{1})$ as follows: $A_{1}=%
\mathbb{N}$ and $F_{1}=\{l_{i}:i\in \mathbb{N}\}$.

\begin{lemma}
\label{L6.1} The information system $U_{1}$ belongs to the class $W_{1}^{l}$%
, $h_{U_{1}}^{ld}(n)=\lceil \log _{2}(n$ $+1)\rceil $, $h_{U_{1}}^{la}(1)=1$
and $h_{U_{1}}^{la}(n)=2$ if $n>1$, $L_{U_{1}}^{ld}(n)=2(n+1)$, and $%
L_{U_{1}}^{la}(n)=2(n+1)$ for any $n\in \mathbb{N}$. This information system
satisfies the condition of reduction with parameter $2$ and has finite
I-dimension equal $1$.
\end{lemma}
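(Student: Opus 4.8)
The plan is to establish two structural facts about $U_{1}$ and then read off all four functions, using Proposition~\ref{P5} and a couple of explicit tree constructions. The key point is that every attribute $l_{i}$ is a threshold on the linearly ordered universe $\mathbb{N}$: $l_{i}(x)=1$ exactly when $x>i$. First I would verify the condition of reduction: a compatible system $\{l_{i_{1}}(x)=\delta _{1},\dots ,l_{i_{r}}(x)=\delta _{r}\}$ asserts only a lower bound $x>\max \{i_{k}:\delta _{k}=1\}$ (if some $\delta _{k}=1$) together with an upper bound $x\le \min \{i_{k}:\delta _{k}=0\}$ (if some $\delta _{k}=0$), so keeping just one equation of each kind gives a subsystem of size at most $2$ with the same set of solutions; hence $U_{1}$ satisfies the condition of reduction with parameter $2$. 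Next, for distinct $l_{i},l_{j}$ with $i<j$ the value combination $(0,1)$ is unrealizable, so $F_{1}$ has no independent $2$-element subset, while $\{l_{1}\}$ is independent (its two values being realized by $x=1$ and $x=2$); hence $I(U_{1})=1$. Since $U_{1}$ has finite I-dimension and satisfies the condition of reduction, the first row of Table~\ref{tab2} (Lemma~\ref{L5} together with Propositions~\ref{P1}--\ref{P4}) places $U_{1}$ in $W_{1}^{l}$.

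For the space complexity I would compute $N_{U_{1}}(n)$. Given attributes $l_{c_{1}},\dots ,l_{c_{n}}$, as $x$ increases through $\mathbb{N}$ each coordinate $l_{c_{i}}(x)$ switches from $0$ to $1$ exactly once, at $x=c_{i}+1$; hence the tuple $(l_{c_{1}}(x),\dots ,l_{c_{n}}(x))$ takes at most $n+1$ distinct values, and exactly $n+1$ when the $c_{i}$ are pairwise distinct (for instance $l_{1},\dots ,l_{n}$ gives the tuples $1^{t}0^{n-t}$, $t=0,\dots ,n$, realized by $x=1,\dots ,n+1$). Therefore $N_{U_{1}}(n)=n+1$, and Proposition~\ref{P5} yields $L_{U_{1}}^{ld}(n)=L_{U_{1}}^{la}(n)=2(n+1)$.

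For $h_{U_{1}}^{ld}(n)$ I would prove matching bounds. Upper bound: a problem $z=(\nu ,l_{c_{1}},\dots ,l_{c_{m}})$ with $m\le n$ is constant on each of the $k+1\le m+1\le n+1$ intervals of $\mathbb{N}$ cut out by the distinct values $j_{1}<\dots <j_{k}$ among $c_{1},\dots ,c_{m}$, and a binary search over these intervals using the queries $l_{j_{1}},\dots ,l_{j_{k}}$ is a deterministic decision tree over $z$ solving $z$ in depth $\lceil \log _{2}(k+1)\rceil \le \lceil \log _{2}(n+1)\rceil$. Lower bound: for $z_{n}=(\nu ,l_{1},\dots ,l_{n})$ with $\nu$ injective, the objects $1,\dots ,n+1$ realize $n+1$ distinct tuples, hence $n+1$ pairwise distinct values $z_{n}(1),\dots ,z_{n}(n+1)$, so in any deterministic decision tree solving $z_{n}$ they reach $n+1$ distinct terminal nodes; since such a tree of depth $h$ has at most $2^{h}$ terminal nodes, $h\ge \lceil \log _{2}(n+1)\rceil$. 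Hence $h_{U_{1}}^{ld}(n)=\lceil \log _{2}(n+1)\rceil$.

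Finally, for $h_{U_{1}}^{la}(n)$ I would treat $n=1$ first: a problem of dimension at most $1$ is either constant, solvable in depth $0$, or of the form $\nu (l_{i}(\cdot ))$ with $\nu$ non-constant, which no depth-$0$ tree computes but the single-query tree solves in depth $1$; so $h_{U_{1}}^{la}(1)=1$. For $n>1$, the upper bound $h_{U_{1}}^{la}(n)\le 2$ comes from a rule-system tree whose root has one branch per interval $I_{t}$: the branch for a middle interval $\{j_{t}+1,\dots ,j_{t+1}\}$ tests $l_{j_{t}}=1$ and then $l_{j_{t+1}}=0$ and ends at the terminal carrying the value of $z$ on $I_{t}$, while the two extreme branches use one test each; every object lies in exactly one interval, so exactly one complete path captures it and is correctly labelled, and no branch contains more than $2$ working nodes. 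The lower bound $h_{U_{1}}^{la}(n)\ge 2$ uses $z=(\nu ,l_{1},l_{2})$ with $\nu (0,0)=\nu (1,1)=1$ and $\nu (1,0)=2$, so that $z(x)=2$ exactly when $x=2$: in a nondeterministic decision tree of depth at most $1$ every realizable set $A(\xi )$ equals one of $\mathbb{N},\{1\},\{2,3,\dots \},\{1,2\},\{3,4,\dots \}$, and $z$ is non-constant on each of these that contains the object $2$, so no complete path $\xi$ has both $2\in A(\xi )$ and $z$ constant on $A(\xi )$, which a tree solving $z$ nondeterministically must provide; hence no such tree exists and $h_{U_{1}}^{la}(n)=2$. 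I expect the real effort to be in getting the constants exactly right --- matching the binary-search depth $\lceil \log _{2}(k+1)\rceil$ against the leaf-count bound $2^{h}\ge n+1$ in the deterministic case, and cleanly separating $n=1$ from $n>1$ in the nondeterministic case via the ``detect $x=2$'' problem and the check on the five possible sets $A(\xi )$.
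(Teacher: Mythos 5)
Your proposal is correct and follows essentially the same route as the paper: compute $N_{U_{1}}(n)=n+1$ and invoke Proposition~\ref{P5} for the node counts, use a binary-search tree and a leaf-counting argument for $h_{U_{1}}^{ld}(n)=\lceil \log _{2}(n+1)\rceil$, build the interval-based depth-$2$ nondeterministic tree for the upper bound on $h_{U_{1}}^{la}$, and place $U_{1}$ in $W_{1}^{l}$ via Lemma~\ref{L5}. The only difference is that you explicitly work out several steps the paper leaves as ``one can show'' (the reduction parameter $2$, $I(U_{1})=1$, and the concrete problem witnessing $h_{U_{1}}^{la}(n)\geq 2$), which is fine.
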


\begin{proof}
It is easy to show that $N_{U_{1}}(n)=n+1$ for any $n\in \mathbb{N}$. Using
Proposition \ref{P5} we obtain $L_{U_{1}}^{ld}(n)=L_{U_{1}}^{la}(n)=2(n+1)$
for any $n\in \mathbb{N}$. Let $n\in \mathbb{N}$. Consider a problem $z=(\nu
,l_{1},\ldots ,l_{n})$ over $U_{1}$ such that, for each $\bar{\delta}_{1},%
\bar{\delta}_{2}\in \{0,1\}^{n}$ with $\bar{\delta}_{1}\neq \bar{\delta}_{2}$%
, $\nu (\bar{\delta}_{1})\neq \nu (\bar{\delta}_{2})$. It is clear that $%
N_{U_{1}}(l_{1},\ldots ,l_{n})=n+1$. Therefore each decision tree $\Gamma $
over $z$ that solves the problem $z$ deterministically has at least $n+1$
terminal nodes. One can show that the number of terminal nodes in $\Gamma $
is at most $2^{h(\Gamma )}$. Hence $n+1\leq 2^{h(\Gamma )}$ and $\log
_{2}(n+1)\leq h(\Gamma )$. Since $h(\Gamma )$ is an integer, $\lceil \log
_{2}(n+1)\rceil \leq h(\Gamma )$. Thus, $h_{U_{1}}^{ld}(n)\geq \lceil \log
_{2}(n+1)\rceil $. Set $m=\lceil \log _{2}(n+1)\rceil $. Then $n\leq 2^{m}-1$%
. One can show that $h_{U_{1}}^{ld}(2^{m}-1)\leq m$ (the construction of an
appropriate decision tree is based on an analog of binary search, and we use
only attributes from the problem description) and $h_{U_{1}}^{ld}(n)\leq
h_{U_{1}}^{ld}(2^{m}-1)$. Therefore $h_{U_{1}}^{ld}(n)\leq \lceil \log
_{2}(n+1)\rceil $ and $h_{U_{1}}^{ld}(n)=\lceil \log _{2}(n+1)\rceil $. It
is clear that $h_{U_{1}}^{la}(1)=1$. Let $n\geq 2$, $z=(\nu ,f_{1},\ldots
,f_{n})$ be an arbitrary problem over $U_{1}$ and $l_{i_{1}},\ldots
,l_{i_{m}}$ be all pairwise different attributes from the set $%
\{f_{1},\ldots ,f_{n}\}$ ordered such that $i_{1}<\ldots <i_{m}$. Then these
attributes divide the set $\mathbb{N}$ into $m+1$ nonempty domains that are
sets of solutions on $\mathbb{N}$ of the following systems of equations: $%
\{l_{i_{1}}(x)=0\}$, $\{l_{i_{1}}(x)=1,l_{i_{2}}(x)=0\}$, \ldots , $%
\{l_{i_{m-1}}(x)=1,l_{i_{m}}(x)=0\}$, $\{l_{i_{m}}(x)=1\}$. The value $z(x)$
is constant in each of the considered domains. Using these facts it is easy
to show that there exists a decision tree $\Gamma $ over $z$, which solves
the problem $z$ nondeterministically and for which $h(\Gamma )=2$ if $m\geq 2
$. Therefore $h_{U_{1}}^{la}(n)\leq 2$. One can show that there exists a
problem $z$ over $U_{1}$ such that $\dim z=n$ and $h_{U_{1}}^{la}(z)\geq 2$.
Therefore $h_{U_{1}}^{la}(n)=2$.

Since the function $h_{U_{1}}^{ld}$ has the type of behavior \textrm{LOG},
the information system $U_{1}$ belongs to the class $W_{1}^{l}$ -- see Lemma \ref{L5} and Table %
\ref{tab1}. One can show that the information system $U_{1}$ satisfies the
condition of reduction with parameter $2$ and has finite I-dimension equal $1
$. \qed
\end{proof}

Define an information system $U_{2}=(A_{2},F_{2})$ as follows: $A_{2}=%
\mathbb{N}$ and $F_{2}=\{p_{i}:i\in \mathbb{N}\}$.

\begin{lemma}
\label{L6.2}The information system $U_{2}$ belongs to the class $W_{2}^{l}$,
$h_{U_{2}}^{ld}(n)=n$, $h_{U_{2}}^{la}(n)=n$, $L_{U_{2}}^{ld}(n)=2(n+1)$,
and $L_{U_{2}}^{la}(n)=2(n+1)$ for any $n\in \mathbb{N}$. This information
system does not satisfy the condition of reduction and has finite
I-dimension equal $1$.
\end{lemma}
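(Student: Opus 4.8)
The plan is to mirror the structure of the proof of Lemma \ref{L6.1}, reducing everything to three parameters of $U_2$ --- the quantity $N_{U_2}(n)$, whether $U_2$ satisfies the condition of reduction, and the I-dimension $I(U_2)$ --- and then invoking Propositions \ref{P1}, \ref{P2}, \ref{P5} together with Lemma \ref{L5}.

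First I would compute $N_{U_2}(n)$. Given attributes $p_{i_1},\ldots,p_{i_n}\in F_2$, for an object $x\in\mathbb{N}$ the tuple $(p_{i_1}(x),\ldots,p_{i_n}(x))$ is the all-zero tuple when $x\notin\{i_1,\ldots,i_n\}$ (such $x$ exists since $\mathbb{N}$ is infinite), and otherwise it is the indicator tuple of the coordinate set $\{j:i_j=x\}$. Hence the realizable tuples are exactly the zero tuple together with, for each distinct value $i$ occurring among $i_1,\ldots,i_n$, the indicator tuple of $\{j:i_j=i\}$; these are pairwise distinct, so $N_{U_2}(p_{i_1},\ldots,p_{i_n})=1+|\{i_1,\ldots,i_n\}|\le n+1$, with equality when the attributes are pairwise distinct. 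Therefore $N_{U_2}(n)=n+1$, and Proposition \ref{P5} gives $L_{U_2}^{ld}(n)=L_{U_2}^{la}(n)=2(n+1)$ for every $n\in\mathbb{N}$.

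Next I would show that $U_2$ does not satisfy the condition of reduction. Fix any $m\in\mathbb{N}$ and consider the system $S=\{p_1(x)=0,\ldots,p_{m+1}(x)=0\}$; it is compatible on $\mathbb{N}$ since every $x>m+1$ is a solution, and its solution set is $\mathbb{N}\setminus\{1,\ldots,m+1\}$. Deleting the equation $p_k(x)=0$ from $S$ produces a system having $k$ as a solution, so every proper subsystem of $S$ has a solution set strictly larger than that of $S$; hence the only subsystem of $S$ with the same solution set is $S$ itself, which has $m+1>m$ equations. Thus $U_2$ does not satisfy the condition of reduction with parameter $m$, and since $m$ was arbitrary, $U_2$ does not satisfy the condition of reduction. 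By Propositions \ref{P1} and \ref{P2} this immediately yields $h_{U_2}^{ld}(n)=n$ and $h_{U_2}^{la}(n)=n$ for all $n\in\mathbb{N}$.

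Finally I would determine $I(U_2)$: a singleton $\{p_i\}$ is independent because both $p_i(x)=0$ and $p_i(x)=1$ have solutions, whereas for distinct $i,j$ the set $\{p_i,p_j\}$ is not independent since $\{p_i(x)=1,p_j(x)=1\}$ has no solution in $\mathbb{N}$; hence $I(U_2)=1$, which is finite. Combining ``finite I-dimension'' with ``does not satisfy the condition of reduction'', Lemma \ref{L5} and Table \ref{tab1} place $U_2$ in the second row, i.e. $U_2\in W_2^{l}$. The whole argument is essentially bookkeeping on top of earlier results; the only places needing a little care are the count $N_{U_2}(n)$ when the chosen attributes are not assumed pairwise distinct, and the verification that \emph{every} proper subsystem of $S$ genuinely enlarges the solution set --- and both are routine.
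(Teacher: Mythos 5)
Your proposal is correct and follows essentially the same route as the paper's proof: compute $N_{U_2}(n)=n+1$ and apply Proposition \ref{P5} for the node counts, exhibit the irreducible system $\{p_1(x)=0,\ldots\}$ to refute the condition of reduction and invoke Propositions \ref{P1} and \ref{P2} for the depths, and use finite I-dimension with Lemma \ref{L5} and Table \ref{tab1} to place $U_2$ in $W_2^l$. You merely fill in the details the paper labels ``easy to show'' (the count of realizable tuples with possibly repeated attributes and the verification that $I(U_2)=1$), which is fine.
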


\begin{proof}
It is easy to show that $N_{U_{2}}(n)=n+1$ for any $n\in \mathbb{N}$. Using
Proposition \ref{P5}, we obtain $L_{U_{2}}^{ld}(n)=L_{U_{2}}^{la}(n)=2(n+1)$
for any $n\in \mathbb{N}$.

Let $n\in \mathbb{N}$.
It is clear that the system of
equations%
\begin{equation*}
S_n=\{p_{1}(x)=0,\ldots ,p_{n}(x)=0\}
\end{equation*}%
is compatible on $A_2$ and each proper subsystem of the system $S_n$ has the set
of solutions different from the set of solutions of $S_n$. Therefore $U_2$ does not satisfy the condition of reduction. Using
Propositions \ref{P1} and \ref{P2}, we obtain $h_{U_{2}}^{ld}(n)=h_{U_{2}}^{la}(n)=n$
for any $n\in \mathbb{N}$.

Since the function $h_{U_{2}}^{ld}$  has the type of
behavior \textrm{LIN} and the function $L_{U_{2}}^{ld}$
has the type of behavior \textrm{POL}, the information system $U_{2}$
belongs to the class $W_{2}^{l}$ -- see Lemma \ref{L5} and  Table \ref{tab1}. One can show that
this information system has finite I-dimension equal $1$. \qed
\end{proof}

Define an information system $U_{3}=(A_{3},F_{3})$ as follows: $A_{3}=%
\mathbb{N}$ and $F_{3}$ is the set of all functions from $\mathbb{N}$ to $%
\{0,1\}$.

\begin{lemma}
\label{L6.3}The information system $U_{3}$ belongs to the class $W_{3}^{l}$, $%
h_{U_{3}}^{ld}(n)=n$, $h_{U_{3}}^{la}(n)=n$, $L_{U_{3}}^{ld}(n)=2^{n+1}$,
and $L_{U_{3}}^{la}(n)=2^{n+1}$ for any $n\in \mathbb{N}$. This information
system does not satisfy the condition of reduction and has infinite I-dimension.
\end{lemma}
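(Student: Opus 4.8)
The plan is to reduce the whole statement to the general Propositions \ref{P1}--\ref{P6} that have already been established; the only genuine work specific to $U_{3}$ is to verify the two structural facts, namely that $U_{3}$ has infinite I-dimension and that $U_{3}$ does not satisfy the condition of reduction. Everything else then follows mechanically.

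First I would show $I(U_{3})=\infty$. Fix an arbitrary $m\in\mathbb{N}$ and enumerate $\{0,1\}^{m}=\{\bar{\delta}^{(1)},\ldots,\bar{\delta}^{(2^{m})}\}$. Since $F_{3}$ consists of \emph{all} functions from $\mathbb{N}$ to $\{0,1\}$, we may define, for $i=1,\ldots,m$, an attribute $g_{i}\in F_{3}$ by setting $g_{i}(j)=\bar{\delta}^{(j)}_{i}$ for $j\le 2^{m}$ and, say, $g_{i}(j)=0$ for $j>2^{m}$. Then for every $\bar{\delta}=\bar{\delta}^{(j)}\in\{0,1\}^{m}$ the object $j\in A_{3}$ is a solution of the system $\{g_{1}(x)=\delta_{1},\ldots,g_{m}(x)=\delta_{m}\}$, so $\{g_{1},\ldots,g_{m}\}$ is an independent subset of $F_{3}$ of cardinality $m$. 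As $m$ was arbitrary, $I(U_{3})=\infty$. By Lemma \ref{L6.0}, an infinite binary information system with infinite I-dimension does not satisfy the condition of reduction, hence $U_{3}$ does not satisfy the condition of reduction.

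Now the four numerical claims follow directly. Since $U_{3}$ does not satisfy the condition of reduction, the (LIN) cases of Propositions \ref{P1} and \ref{P2} give $h_{U_{3}}^{ld}(n)=n$ and $h_{U_{3}}^{la}(n)=n$ for every $n\in\mathbb{N}$. Since $U_{3}$ has infinite I-dimension, the (EXP) case of Proposition \ref{P6} gives $N_{U_{3}}(n)=2^{n}$, and then Proposition \ref{P5} yields $L_{U_{3}}^{ld}(n)=L_{U_{3}}^{la}(n)=2N_{U_{3}}(n)=2^{n+1}$ for every $n\in\mathbb{N}$ (equivalently, one may invoke the (EXP) case of Proposition \ref{P3} together with Proposition \ref{P4} directly). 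Finally, $h_{U_{3}}^{ld}$ has type of behavior $\mathrm{LIN}$ and $L_{U_{3}}^{ld}$ has type of behavior $\mathrm{EXP}$, so by Lemma \ref{L5} and Table \ref{tab1} the local type of $U_{3}$ is the third row, i.e.\ $U_{3}\in W_{3}^{l}$.

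I do not expect any real obstacle in this proof: the lemma is a worked example designed precisely to instantiate the $\mathrm{EXP}$ row of Table \ref{tab1}, and all the substantive content lives in Propositions \ref{P1}--\ref{P6} and Lemma \ref{L6.0}. The only point needing a moment's care is the explicit construction of arbitrarily large independent families, and this is immediate because $F_{3}$ contains every $\{0,1\}$-valued function on $\mathbb{N}$, so one can always realize all $2^{m}$ sign patterns on $m$ chosen objects.
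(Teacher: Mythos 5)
Your proposal is correct and follows essentially the same route as the paper: establish infinite I-dimension, invoke Lemma \ref{L6.0} to rule out the condition of reduction, then apply Propositions \ref{P1}--\ref{P4} (equivalently \ref{P5}--\ref{P6}) and Lemma \ref{L5} with Table \ref{tab1} to place $U_{3}$ in $W_{3}^{l}$. The only difference is that you spell out the construction of arbitrarily large independent subsets of $F_{3}$, a detail the paper dismisses as easy, and your construction is valid.
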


\begin{proof}
It is easy to show that the information system $U_{3}$ has infinite
I-dimension. Using Lemma \ref{L6.0}, we obtain that the system $U_{3}$ does
not satisfy the condition of reduction. Let $n\in \mathbb{N}$. By
Propositions \ref{P1} and \ref{P2}, $h_{U_{3}}^{ld}(n)=
h_{U_{3}}^{la}(n)=n$. By Propositions \ref{P3} and \ref{P4}, $L_{U_{3}}^{ld}(n)=L_{U_{3}}^{la}(n)=2^{n+1}$.

Since the function  $L_{U_{3}}^{ld}$
 has the type of behavior \textrm{EXP}, the information system $U_{3}$
belongs to the class $W_{3}^{l}$ -- see Lemma \ref{L5} and  Table \ref{tab1}. \qed
\end{proof}

\begin{proof}[Proof of Theorem \protect\ref{T1}]
The statements of the theorem follow from Lemmas \ref{L5}-\ref{L6.3}. \qed
\end{proof}

\section{Proofs of Theorems \protect\ref{T2}-\protect\ref{T4}}

\label{S5}

First, we prove a number of auxiliary statements.

\begin{lemma}
\label{L7}Let $U=(A,F)$ be an infinite binary information system. Then the
information system $U$ is $ld$-reachable.
\end{lemma}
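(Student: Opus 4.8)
The plan is to unfold the definition of $ld$-reachability: given an arbitrary problem $z=(\nu,f_1,\ldots,f_n)$ over $U$ with $n=\dim z$, I must exhibit a single deterministic decision tree $\Gamma$ over $z$ solving $z$ with $h(\Gamma)\le h_U^{ld}(n)$ \emph{and} $L(\Gamma)\le L_U^{ld}(n)$ at the same time. The natural candidate is a tree that is first minimal in depth and then, among those, minimal in the number of nodes: let $\Gamma$ be a deterministic decision tree over $z$ solving $z$ with $h(\Gamma)=h_U^{ld}(z)$ (such a tree exists by the definition of $h_U^{ld}(z)$), chosen so that $L(\Gamma)$ is as small as possible among all such minimum-depth trees.

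First I would show that $\Gamma$ inherits the structural properties of Lemma \ref{L1}. If some node $w$ of $\Gamma$ other than the root had no realizable complete path through it, deleting the edge entering $w$ together with the subtree rooted at $w$ would produce a deterministic decision tree still solving $z$, with no greater depth and strictly fewer nodes, contradicting the choice of $\Gamma$; hence every node of $\Gamma$ lies on a realizable complete path. Likewise, contracting any working node that has a single leaving edge produces a deterministic tree solving $z$ with no greater depth and fewer nodes, so every working node of $\Gamma$ has exactly two leaving edges and therefore $\Gamma\in G_d^2(U)$. The essential point that keeps this argument valid is that both surgery operations can only decrease the depth, so they never leave the class of minimum-depth trees.

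Next I would bound $L_t(\Gamma)$ exactly as in the proof of Proposition \ref{P5}. For each terminal node $t$ of $\Gamma$ the unique complete path $\xi_t$ from the root to $t$ is realizable, so I can pick $a_t\in A(\xi_t)$; for two distinct terminal nodes $t\ne t'$ the paths $\xi_t$ and $\xi_{t'}$ diverge at a working node labeled with some $f_j$, along edges labeled $0$ and $1$ respectively (determinism), whence $(f_1(a_t),\ldots,f_n(a_t))\ne(f_1(a_{t'}),\ldots,f_n(a_{t'}))$. Thus $L_t(\Gamma)\le N_U(f_1,\ldots,f_n)\le N_U(n)$. Since $\Gamma\in G_d^2(U)$, Lemma \ref{L3}(a) gives $L_w(\Gamma)=L_t(\Gamma)-1$, so $L(\Gamma)=1+L_t(\Gamma)+L_w(\Gamma)=2L_t(\Gamma)\le 2N_U(n)=L_U^{ld}(n)$ by Proposition \ref{P5}. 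Combining this with $h(\Gamma)=h_U^{ld}(z)\le h_U^{ld}(n)$ shows that $(h_U^{ld},L_U^{ld})$ is a boundary $ld$-pair of $U$, i.e. $U$ is $ld$-reachable.

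I expect the only subtle step to be the second paragraph: confirming that requiring minimality of the number of nodes \emph{subject to} the depth constraint still forces realizability of all nodes and forces $\Gamma\in G_d^2(U)$ — that is, that the pruning and contraction moves used for Lemma \ref{L1} remain available without violating $h(\Gamma)=h_U^{ld}(z)$. Once that is in place, the counting is essentially a repetition of the argument already used for Proposition \ref{P5}.
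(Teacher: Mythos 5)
Your proof is correct and takes essentially the same route as the paper's: the paper starts from a depth-optimal deterministic tree and explicitly prunes it so that every complete path is realizable and every working node has two leaving edges, then bounds $L_t$ by $N_U(n)$ and applies Lemma \ref{L3} and Proposition \ref{P5}; your choice of a node-minimal tree among depth-minimal ones just packages the same pruning argument as an extremality argument.
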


\begin{proof}
Let $z=(\nu ,f_{1},\ldots ,f_{n})$ be a problem over $U$. Then there
exists a decision tree $\Gamma $ over $z$, which solves this problem
deterministically and whose depth is at most $h_{U}^{ld}(n)$. By removal of
some nodes and edges from $\Gamma $, we can obtain a decision tree $\Gamma
^{\prime }$ over $z$, which solves the problem $z$ deterministically and in
which each working node has exactly two leaving edges and each complete path
is realizable.
Let $\xi _{1}$ and $\xi _{2}$ be different complete paths
in $\Gamma ^{\prime }$, $a_{1}\in A(\xi _{1})$, and $a_{2}\in A(\xi _{2})$. It is easy
to show that $(f_{1}(a_{1}),\ldots ,f_{n}(a_{1}))\neq (f_{1}(a_{2}),\ldots
,f_{n}(a_{2}))$.
Therefore $L_{t}(\Gamma ^{\prime })\leq N_{U}(f_{1},\ldots ,f_{n})\leq N_{U}(n)$. It is clear that
$\Gamma ^{\prime } \in G_{d}^{2}(U)$.
 By Lemma \ref{L3}, $%
L_{w}(\Gamma ^{\prime })= L_{t}(\Gamma ^{\prime })-1$. Therefore $L(\Gamma ^{\prime })\leq
2N_{U}(n)$. By Proposition \ref{P5}, $2N_{U}(n)=L_{U}^{ld}(n)$. Taking into account that $h(\Gamma ^{\prime
})\leq h_{U}^{ld}(n)$ and $z$ is an arbitrary problem over $U$ with $%
\dim z=n$, we obtain that $U$ is $ld$-reachable. \qed
\end{proof}

\begin{lemma}
\label{L8}Let $U$ be an infinite binary information system such that $%
h_{U}^{la}(n)=n$ for any $n\in \mathbb{N}$. Then the information system $U$
is $la$-reachable.
\end{lemma}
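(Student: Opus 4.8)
The plan is to reduce the claim to $ld$-reachability, which has just been established in Lemma \ref{L7}, by showing that the hypothesis forces the deterministic and nondeterministic complexity functions of $U$ to coincide. The key observation is that a decision tree which solves a problem deterministically automatically solves it nondeterministically, so any deterministic witness tree of the right depth and size is simultaneously a nondeterministic witness tree.

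First I would note that $U$ cannot satisfy the condition of reduction: otherwise Proposition \ref{P2} would give $h_{U}^{la}(n)=O(1)$, contradicting the assumption that $h_{U}^{la}(n)=n$ for every $n\in\mathbb{N}$. Hence, by Proposition \ref{P1}, $h_{U}^{ld}(n)=n$ for every $n\in\mathbb{N}$, so $h_{U}^{ld}=h_{U}^{la}$ as functions on $\mathbb{N}$. By Proposition \ref{P4}, also $L_{U}^{ld}=L_{U}^{la}$.

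Now fix an arbitrary problem $z$ over $U$ and put $n=\dim z$. Lemma \ref{L7} asserts that $U$ is $ld$-reachable, that is, $(h_{U}^{ld},L_{U}^{ld})$ is a boundary $ld$-pair of $U$; therefore there is a decision tree $\Gamma$ over $z$ that solves $z$ deterministically with $h(\Gamma)\le h_{U}^{ld}(n)$ and $L(\Gamma)\le L_{U}^{ld}(n)$. Using the two equalities above, $h(\Gamma)\le h_{U}^{la}(n)$ and $L(\Gamma)\le L_{U}^{la}(n)$, and since $\Gamma$ solves $z$ deterministically it also solves $z$ nondeterministically. As $z$ was arbitrary, $(h_{U}^{la},L_{U}^{la})$ is a boundary $la$-pair of $U$, which is exactly the assertion that $U$ is $la$-reachable.

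I do not expect a genuine obstacle here: the whole argument is a short chain through Propositions \ref{P1}, \ref{P2}, \ref{P4} and Lemma \ref{L7}. The only point worth being careful about is recognizing that the hypothesis $h_{U}^{la}(n)=n$ is precisely what places $U$ in the case where $U$ does not satisfy the condition of reduction, which is what makes the deterministic and nondeterministic depth and node-count functions agree and lets the deterministic construction do double duty.
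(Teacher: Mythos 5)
Your argument is correct, but it takes a different route from the paper. You reduce the claim to Lemma \ref{L7}: from the hypothesis $h_{U}^{la}(n)=n$ you rule out the condition of reduction via Proposition \ref{P2}, then get $h_{U}^{ld}(n)=n=h_{U}^{la}(n)$ from Proposition \ref{P1} and $L_{U}^{ld}=L_{U}^{la}$ from Proposition \ref{P4}, so the deterministic witness trees supplied by $ld$-reachability already meet the nondeterministic bounds (a deterministic solution being, by definition, also a nondeterministic one). The paper instead argues directly: for a problem $z=(\nu,f_{1},\ldots,f_{n})$ it takes the deterministic tree querying $f_{1},\ldots,f_{n}$ in order along every path, prunes non-realizable paths and contracts working nodes with a single leaving edge, and then bounds the result by $h(\Gamma')\le n=h_{U}^{la}(n)$ and, via Lemma \ref{L3} and Proposition \ref{P5}, $L(\Gamma')\le 2N_{U}(n)=L_{U}^{la}(n)$. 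Your version is shorter and more modular, reusing Lemma \ref{L7} and the depth dichotomies (Propositions \ref{P1} and \ref{P2}, which are imported from prior work) instead of repeating a construction; the paper's version is self-contained at this point, needing only the counting machinery (Lemma \ref{L3}, Proposition \ref{P5}) and the hypothesis in the single step $h(\Gamma')\le n$. Both establish exactly the required boundary $la$-pair $(h_{U}^{la},L_{U}^{la})$, so there is no gap in your proposal.
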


\begin{proof}
Let $z=(\nu ,f_{1},\ldots ,f_{n})$ be a problem over $U$ and $\Gamma $ be a
decision tree over $z$ that solves the problem $z$ deterministically and
satisfies the following conditions: the number of working nodes in each
complete path of $\Gamma $ is equal to $n$ and these nodes in the order from
the root to a terminal node are labeled with attributes $f_{1},\ldots ,f_{n}$%
. Remove from $\Gamma $ all nodes and edges that do not belong to realizable
complete paths. Let $w$ be a working node in the obtained tree that has only
one leaving edge $d$ entering a node $v$. We remove the node $w$ and edge $%
d$ and connect the edge $e$ entering $w$ to the node $v$. We do the same
with all working nodes with only one leaving edge. Denote by $\Gamma
^{\prime }$ the obtained decision tree. It is clear that $\Gamma ^{\prime }$
solves the problem $z$ deterministically and hence nondeterministically, $%
\Gamma ^{\prime }\in G_{d}^{2}(U)$, and $L_{t}(\Gamma ^{\prime })\leq
N_{U}(f_{1},\ldots ,f_{n})\leq N_{U}(n)$. By Lemma \ref{L3}, $L_{w}(\Gamma
^{\prime })=L_{t}(\Gamma ^{\prime })-1$. Therefore $L(\Gamma ^{\prime })\leq
2N_{U}(n)$. Using Proposition \ref{P5}, we obtain $L(\Gamma ^{\prime })\leq
L_{U}^{la}(n)$. It is clear that $h(\Gamma ^{\prime })\leq n=h_{U}^{la}(n)$.
Therefore $U$ is $la$-reachable. \qed
\end{proof}

\begin{lemma}
\label{L9}Let $U$ be an infinite binary information system, which satisfies
the condition of reduction. Then the information system $U$ is not $la$%
-reachable.
\end{lemma}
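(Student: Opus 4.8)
The plan is to argue by contradiction: assume $U$ is $la$-reachable and derive an impossibility. The governing intuition is a clash between two facts that the condition of reduction makes available. On the time side, $h_{U}^{la}(n)$ is bounded by an absolute constant $c$ (Proposition \ref{P2}), so any tree witnessing $la$-reachability has depth at most $c$. On the space side, $L_{U}^{la}(n)=2N_{U}(n)$ (Proposition \ref{P5}), and since the condition of reduction forces finite I-dimension (Lemma \ref{L6.0}), $N_{U}(n)\to\infty$ with $n+1\le N_{U}(n)$ (Proposition \ref{P6}). As in the lower-bound half of the proof of Proposition \ref{P5}, a nondeterministic decision tree that actually attains the node budget $2N_{U}(n)$ while still having at least $N_{U}(n)$ terminal nodes is forced, via Lemmas \ref{L3} and \ref{L4}, to be deterministic with every working node branching exactly two ways; but a deterministic tree of bounded depth $c$ has at most $2^{c}$ terminal nodes, which cannot keep up with $N_{U}(n)$.

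Concretely, I would first fix a constant $c$ with $h_{U}^{la}(n)\le c$ for all $n$ (Proposition \ref{P2}), set $n=2^{c}$, and take the same ``hard'' problem $z=(\nu,f_{1},\ldots,f_{n})$ used in the proof of Proposition \ref{P5}: one with $N_{U}(f_{1},\ldots,f_{n})=N_{U}(n)$ and with $\nu$ injective on $\{0,1\}^{n}$. If $U$ were $la$-reachable, there would be a decision tree $\Gamma$ over $z$ solving $z$ nondeterministically with $h(\Gamma)\le h_{U}^{la}(n)\le c$ and $L(\Gamma)\le L_{U}^{la}(n)=2N_{U}(n)$.

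Next I would canonicalize $\Gamma$ without increasing its depth or its number of nodes: repeatedly delete maximal non-realizable subtrees, and apply the surgeries from the proofs of Lemma \ref{L2}(b) and \ref{L2}(c) — whenever a working node has two equally labeled outgoing edges one of whose subtrees is full, or the root has two outgoing edges one of whose subtrees is full, delete the redundant edge and its subtree. Each operation weakly decreases $L$, does not increase $h$, and preserves the property of solving $z$ nondeterministically, so the process terminates in a tree $\Gamma^{*}$ over $z$ with $\Gamma^{*}\in G_{a}^{f}(U)$, all complete paths realizable, $h(\Gamma^{*})\le c$, and $L(\Gamma^{*})\le 2N_{U}(n)$. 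Because $\nu$ is injective, the $N_{U}(n)$ realizable $n$-tuples of attribute values force $N_{U}(n)$ distinct terminal labels, so $L_{t}(\Gamma^{*})\ge N_{U}(n)$ and hence $L_{w}(\Gamma^{*})=L(\Gamma^{*})-1-L_{t}(\Gamma^{*})\le N_{U}(n)-1\le L_{t}(\Gamma^{*})-1$. By Lemmas \ref{L3} and \ref{L4}, this inequality is possible only if $\Gamma^{*}\in G_{d}^{2}(U)$. But a tree in $G_{d}^{2}(U)$ of depth at most $c$ has at most $2^{c}$ terminal nodes, so $N_{U}(n)\le L_{t}(\Gamma^{*})\le 2^{c}=n<n+1\le N_{U}(n)$ (Proposition \ref{P6}), a contradiction; hence $U$ is not $la$-reachable.

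I expect the canonicalization step to be the main obstacle: one must verify that the deletions of non-realizable subtrees together with the two ``fullness'' surgeries (i) genuinely land the tree in $G_{a}^{f}(U)$, (ii) never increase the depth or the node count, and (iii) keep $\Gamma^{*}$ a correct nondeterministic solver of $z$ — in particular that a surgery never destroys a realizable path that some object relies on, which is precisely the point established inside the proof of Lemma \ref{L2}(b). The remainder is routine bookkeeping with the identity $L=1+L_{t}+L_{w}$ and the already-proved Propositions \ref{P2}, \ref{P5}, and \ref{P6}.
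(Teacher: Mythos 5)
Your proposal is correct and takes essentially the same route as the paper: the same hard problem (injective $\nu$ with $N_{U}(f_{1},\ldots ,f_{n})=N_{U}(n)$), the same passage into $G_{a}^{f}(U)$ via the surgeries from the proof of Lemma \ref{L2} (the paper obtains this by taking a node-minimal tree of depth at most $c$ rather than by explicit canonicalization), and the same clash between the constant bound on $h_{U}^{la}$ from Proposition \ref{P2} and the growth of $N_{U}(n)$, mediated by Lemmas \ref{L3} and \ref{L4} and Proposition \ref{P5}. The only differences are cosmetic rearrangements: you argue by contradiction with $n=2^{c}$ and use the node budget to force the tree into $G_{d}^{2}(U)$ before applying the depth bound, whereas the paper uses the depth bound to exclude $G_{d}^{2}(U)$ and then Lemma \ref{L4} to push $L(\Gamma )$ strictly above $L_{U}^{la}(n)$.
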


\begin{proof}
By Proposition \ref{P2}, the function $h_{U}^{la}(n)$ is bounded from above
by a positive constant $c$. By Proposition \ref{P6}, the function $N_{U}(n)$
is not bounded from above by a constant. Choose $n\in \mathbb{N}$ such that $%
N_{U}(n)>2^{2c}$. Let $z=(\nu ,f_{1},\ldots ,f_{n})$ be a problem over $U$
such that $\nu (\bar{\delta}_{1})\neq \nu (\bar{\delta}_{2})$ for any $\bar{%
\delta}_{1},\bar{\delta}_{2}\in \{0,1\}^{n}$, $\bar{\delta}_{1}\neq \bar{%
\delta}_{2}$, and $N_{U}(f_{1},\ldots ,f_{n})=N_{U}(n)$. Let $\Gamma $ be a
decision tree over $z$, which solves the problem $z$ nondeterministically,
for which $h(\Gamma )\leq h_{U}^{la}(n)\leq c$, and which has the minimum
number of nodes among such trees. In the same way as it was done in the
proof of Lemma \ref{L2}, we can prove that $\Gamma \in G_{a}^{f}(U)$.
It is clear that $L_{t}(\Gamma )\geq N_{U}(f_{1},\ldots ,f_{n})=N_{U}(n)$.
Let us assume that $\Gamma \in G_{d}^{2}(U)$. Then it is easy to show that $%
h(\Gamma )\geq \log _{2}L_{t}(\Gamma )\geq \log _{2}N_{U}(n)>2c$, which is
impossible by the choice of $\Gamma $. Therefore $\Gamma \in
G_{a}^{f}(U)\setminus G_{d}^{2}(U)$. By Lemma \ref{L4}, $L_{w}(\Gamma
)>L_{t}(\Gamma )-1\geq N_{U}(n)-1$. Using Proposition \ref{P5}, we obtain $%
L(\Gamma )>2N_{U}(n)=L_{U}^{la}(n)$. Therefore $U$ is not $la$-reachable. \qed
\end{proof}

\begin{lemma}
\label{L10}Let $U=(A,F)$ be an infinite binary information system, which satisfies
the condition of reduction with parameter $m$. Then $%
(m,(m+1)L_{U}^{la}(n)/2+1)$ is a boundary $la$-pair of the system $U$.
\end{lemma}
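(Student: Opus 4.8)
The plan is to construct directly, for an arbitrary problem $z=(\nu,f_{1},\ldots,f_{n})$ over $U$, a nondeterministic decision tree over $z$ of depth at most $m$ whose complete paths encode the reduced subsystems provided by the condition of reduction. Let $D$ be the set of all tuples $\bar{\delta}=(\delta_{1},\ldots,\delta_{n})\in\{0,1\}^{n}$ for which the system $\{f_{1}(x)=\delta_{1},\ldots,f_{n}(x)=\delta_{n}\}$ is compatible on $A$; by definition $|D|=N_{U}(f_{1},\ldots,f_{n})\le N_{U}(n)$. For each $\bar{\delta}\in D$, since $U$ satisfies the condition of reduction with parameter $m$, there is a subsystem $\{f_{j_{1}}(x)=\delta_{j_{1}},\ldots,f_{j_{k}}(x)=\delta_{j_{k}}\}$ with $k\le m$ equations and the same set of solutions from $A$.

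Next I would build $\Gamma$ as follows: take a root $r$ with $|D|$ unlabeled edges leaving it, one edge $e_{\bar{\delta}}$ for each $\bar{\delta}\in D$; let $e_{\bar{\delta}}$ enter the first node of a directed path with $k$ working nodes labeled, in order, by $f_{j_{1}},\ldots,f_{j_{k}}$, whose leaving edges are labeled, in order, by $\delta_{j_{1}},\ldots,\delta_{j_{k}}$, and ending in a terminal node labeled by the number $\nu(\bar{\delta})$ (if $k=0$, the edge $e_{\bar{\delta}}$ goes straight from $r$ to this terminal node). Every working node of $\Gamma$ carries an attribute from $\{f_{1},\ldots,f_{n}\}$, so $\Gamma$ is a decision tree over $z$, and each complete path of $\Gamma$ has at most $m$ working nodes, hence $h(\Gamma)\le m$.

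Then I would check that $\Gamma$ solves $z$ nondeterministically. Denote by $\xi_{\bar{\delta}}$ the complete path through $e_{\bar{\delta}}$. By construction, $A(\xi_{\bar{\delta}})$ is the set of solutions from $A$ of the reduced subsystem associated with $\bar{\delta}$, which coincides with the set of solutions of $\{f_{1}(x)=\delta_{1},\ldots,f_{n}(x)=\delta_{n}\}$; on this set $z(x)=\nu(f_{1}(x),\ldots,f_{n}(x))=\nu(\bar{\delta})$, the label of the terminal node of $\xi_{\bar{\delta}}$. Conversely, for any $a\in A$ the tuple $\bar{\delta}=(f_{1}(a),\ldots,f_{n}(a))$ lies in $D$, and $a$ satisfies the reduced subsystem for $\bar{\delta}$, so $a\in A(\xi_{\bar{\delta}})$. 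Thus for every $a$ there is a complete path $\xi$ with $a\in A(\xi)$, and every such path ends in a node labeled $z(a)$; that is, $\Gamma$ solves $z$ nondeterministically.

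Finally, I would count the nodes: $\Gamma$ has the root plus, for each $\bar{\delta}\in D$, at most $m$ working nodes and one terminal node, so $L(\Gamma)\le 1+(m+1)|D|\le 1+(m+1)N_{U}(n)$. By Proposition \ref{P5}, $N_{U}(n)=L_{U}^{la}(n)/2$, hence $L(\Gamma)\le (m+1)L_{U}^{la}(n)/2+1$. Since $z$ was an arbitrary problem over $U$ with $\dim z=n$, the pair $(m,(m+1)L_{U}^{la}(n)/2+1)$ is a boundary $la$-pair of $U$. There is no genuine obstacle here: the only idea is to place the reduced subsystems in parallel off a single (nondeterministic) root, after which everything is the routine verification above. \qed
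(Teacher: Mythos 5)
Your proposal is correct and follows essentially the same route as the paper: for each compatible tuple you use the condition of reduction to replace the full system by a subsystem of at most $m$ equations, attach one complete path per tuple to a common root, bound the depth by $m$ and the number of nodes by $(m+1)N_{U}(n)+1$, and convert via Proposition \ref{P5}. The only difference is that you spell out the routine verification that $\Gamma$ solves $z$ nondeterministically, which the paper leaves to the reader.
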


\begin{proof}
Let $z=(\nu ,f_{1},\ldots ,f_{n})$ be a problem over $U$. We now describe a
decision tree $\Gamma $ over $z$, which solves the problem $z$
nondeterministically and for which $h(\Gamma )\leq m$ and  $L(\Gamma )\leq (m+1)L_{U}^{la}(n)/2+1$. For each tuple
$\bar{\delta}=(\delta _{1},\ldots ,\delta _{n})\in \{0,1\}^{n}$ for which the
system of equations
\begin{equation*}
S_{\bar{\delta}}=\{f_{1}(x)=\delta _{1},\ldots ,f_{n}(x)=\delta _{n}\}
\end{equation*}%
has a solution from $A$, we describe a complete path $\xi _{\bar{\delta}}$.
Since the information system $U$ satisfies the condition of reduction with
parameter $m$, there exists a subsystem
\begin{equation*}
S_{\bar{\delta}}^{\prime }=\{f_{i_{1}}(x)=\delta _{i_{1}},\ldots
,f_{i_{t}}(x)=\delta _{i_{t}}\}
\end{equation*}%
of the system $S_{\bar{\delta}}$, which has the same set of solutions and
for which $t\leq m$. Then
\begin{equation*}
\xi _{\bar{\delta}}=v_{0},d_{0},v_{1},d_{1},\ldots ,v_{t},d_{t},v_{t+1},
\end{equation*}%
where the node $v_{0}$ and the edge $d_{0}$ are not labeled, for $j=1,\ldots
,t$, the node $v_{j}$ is labeled with the attribute $f_{i_{j}}$ and the edge
$d_{j}$ is labeled with the number $\delta_{i_{j}}$, and the node $v_{t+1}$ is
labeled with the number $\nu (\bar{\delta})$. We merge initial nodes of all
such complete paths and denote by $\Gamma $ the obtained tree. One can show
that $\Gamma $ is a decision tree over $z$, which solves the problem $z$
nondeterministically and for which $h(\Gamma )\leq m$. The number of the
considered complete paths is equal to $N_{U}(f_{1},\ldots ,f_{n})\leq
N_{U}(n)$. The number of nodes in each complete paths is at most $m+2$.
Therefore $L(\Gamma )\leq (m+1)N_{U}(n)+1$. By Proposition \ref{P5}, $%
N_{U}(n)=L_{U}^{la}(n)/2$. Hence $L(\Gamma )\leq (m+1)L_{U}^{la}(n)/2+1$.
Thus, $(m,(m+1)L_{U}^{la}(n)/2+1)$ is a boundary $la$-pair of the system $U$. \qed
\end{proof}

\begin{proof}[Proof of Theorem \protect\ref{T2}]
Each information system from the class $W_{1}^l$ satisfies the condition of
reduction (see Table \ref{tab2}).

(a) Let $U$ be an information system from the class $W_{1}^l$. Using Lemma \ref%
{L7}, we obtain that the system $U$ is $ld$-reachable.

(b) Let $U$ be an information system from the class $W_{1}^l$. Then, for
some $m\in \mathbb{N}$, the system $U$ satisfies the condition of reduction
with parameter $m$. Using Lemma \ref{L9}, we obtain that the system $U$ is
not $la$-reachable. Using Lemma \ref{L10}, we obtain that $%
(m,(m+1)L_{U}^{la}(n)/2+1)$ is a boundary $la$-pair of the system $U$. \qed
\end{proof}

\begin{proof}[Proof of Theorem \protect\ref{T3}]
Each information system from the class $W_{2}^l$ does not satisfy the
condition of reduction (see Table \ref{tab2}).

(a) Let $U$ be an information system from the class $W_{2}^l$. Using Lemma \ref%
{L7}, we obtain that the system $U$ is $ld$-reachable.

(b) Let $U$ be an information system from the class $W_{2}^l$. By Proposition %
\ref{P2}, $h_{U}^{la}(n)=n$ for any $n\in \mathbb{N}$. Using Lemma \ref{L8},
we obtain that the system $U$ is $la$-reachable. \qed
\end{proof}

\begin{proof}[Proof of Theorem \protect\ref{T4}]
Each information system from the class $W_{3}^l$ does not satisfy the
condition of reduction (see
Table \ref{tab2}).

(a) Let $U$ be an information system from the class $W_{3}^l$. Using Lemma \ref%
{L7}, we obtain that the system $U$ is $ld$-reachable.

(b) Let $U$ be an information system from the class $W_{3}^l$.  By
Proposition \ref{P2}, $h_{U}^{la}(n)=n$ for any $n\in \mathbb{N}$. Using
Lemma \ref{L8}, we obtain that the system $U$ is $la$-reachable. \qed
\end{proof}

\section{Conclusions} \label{S6}

In this paper, we divided the set of all infinite binary information systems  into three complexity classes depending on the worst case time and space complexity of deterministic and nondeterministic decision trees. This allowed us to identify nontrivial relationships between deterministic decision trees and decision rule systems represented by nondeterministic decision trees. For each complexity class, we studied issues related to time-space trade-off for deterministic and nondeterministic decision trees. In the future, we are planning to generalize the obtained results to the case of classes of decision tables closed under operations of removal of attributes and changing decisions attached to rows of decision tables.

\subsection*{Acknowledgements}

Research reported in this publication was
supported by King Abdullah University of Science and Technology (KAUST).

\bibliographystyle{spmpsci}
\bibliography{1C}

\begin{thebibliography}{10}
\providecommand{\url}[1]{{#1}}
\providecommand{\urlprefix}{URL }
\expandafter\ifx\csname urlstyle\endcsname\relax
  \providecommand{\doi}[1]{DOI~\discretionary{}{}{}#1}\else
  \providecommand{\doi}{DOI~\discretionary{}{}{}\begingroup
  \urlstyle{rm}\Url}\fi

\bibitem{AbouEisha19}
AbouEisha, H., Amin, T., Chikalov, I., Hussain, S., Moshkov, M.: Extensions of
  Dynamic Programming for Combinatorial Optimization and Data Mining,
  \emph{Intelligent Systems Reference Library}, vol. 146.
\newblock Springer, Cham (2019)

\bibitem{Alsolami20}
Alsolami, F., Azad, M., Chikalov, I., Moshkov, M.: Decision and Inhibitory
  Trees and Rules for Decision Tables with Many-valued Decisions,
  \emph{Intelligent Systems Reference Library}, vol. 156.
\newblock Springer, Cham (2020)

\bibitem{Ben-Or83}
Ben{-}Or, M.: Lower bounds for algebraic computation trees (preliminary
  report).
\newblock In: 15th Annual {ACM} Symposium on Theory of Computing, STOC 1983,
  pp. 80--86 (1983)

\bibitem{BorosHIK97}
Boros, E., Hammer, P.L., Ibaraki, T., Kogan, A.: Logical analysis of numerical
  data.
\newblock Math. Program. \textbf{79}, 163--190 (1997)

\bibitem{BorosHIKMM00}
Boros, E., Hammer, P.L., Ibaraki, T., Kogan, A., Mayoraz, E., Muchnik, I.B.: An
  implementation of logical analysis of data.
\newblock {IEEE} Trans. Knowl. Data Eng. \textbf{12}(2), 292--306 (2000)

\bibitem{Breiman84}
Breiman, L., Friedman, J.H., Olshen, R.A., Stone, C.J.: Classification and
  Regression Trees.
\newblock Wadsworth, Belmont, CA (1984)

\bibitem{ChikalovLLMNSZ13}
Chikalov, I., Lozin, V.V., Lozina, I., Moshkov, M., Nguyen, H.S., Skowron, A.,
  Zielosko, B.: Three Approaches to Data Analysis - Test Theory, Rough Sets and
  Logical Analysis of Data, \emph{Intelligent Systems Reference Library},
  vol.~41.
\newblock Springer, Berlin Heidelberg (2013)

\bibitem{Dobkin78}
Dobkin, D.P., Lipton, R.J.: A lower bound of the $(1/2)n^2$ on linear search
  programs for the knapsack problem.
\newblock J. Comput. Syst. Sci. \textbf{16}(3), 413--417 (1978)

\bibitem{Dobkin79}
Dobkin, D.P., Lipton, R.J.: On the complexity of computations under varying
  sets of primitives.
\newblock J. Comput. Syst. Sci. \textbf{18}(1), 86--91 (1979)

\bibitem{Kerven23b}
Durdymyradov, K., Moshkov, M.: Time and space complexity of deterministic and
  nondeterministic decision trees. {L}ocal approach.
\newblock In: 2023 IEEE International Conference on Big Data. 9th Special
  Session on Information Granulation in Data Science and Scalable Computing
  (2023).
\newblock (to appear)

\bibitem{FurnkranzGL12}
F{\"{u}}rnkranz, J., Gamberger, D., Lavrac, N.: Foundations of Rule Learning.
\newblock Cognitive Technologies. Springer, Berlin Heidelberg (2012)

\bibitem{Gabrielov}
Gabrielov, A., Vorobjov, N.: On topological lower bounds for algebraic
  computation trees.
\newblock Found. Comput. Math. \textbf{17}(1), 61--72 (2017)

\bibitem{Grigoriev95}
Grigoriev, D., Karpinski, M., Vorobjov, N.: Improved lower bound on testing
  membership to a polyhedron by algebraic decision trees.
\newblock In: 36th Annual Symposium on Foundations of Computer Science, FOCS
  1995, pp. 258--265 (1995)

\bibitem{Grigoriev98}
Grigoriev, D., Karpinski, M., Yao, A.C.: An exponential lower bound on the size
  of algebraic decision trees for {M}ax.
\newblock Computational Complexity \textbf{7}(3), 193--203 (1998)

\bibitem{Molnar22}
Molnar, C.: Interpretable Machine Learning. A Guide for Making Black Box Models
  Explainable, 2 edn. (2022).
\newblock \urlprefix\url{christophm.github.io/interpretable-ml-book/}

\bibitem{Moravek72}
Mor{\'{a}}vek, J.: A localization problem in geometry and complexity of
  discrete programming.
\newblock Kybernetika \textbf{8}(6), 498--516 (1972)

\bibitem{Moshkov94b}
Moshkov, M.: Decision Trees. Theory and Applications (in Russian).
\newblock Nizhny Novgorod University Publishers, Nizhny Novgorod (1994)

\bibitem{Moshkov94a}
Moshkov, M.: Optimization problems for decision trees.
\newblock Fundam. Inform. \textbf{21}(4), 391--401 (1994)

\bibitem{Moshkov95}
Moshkov, M.: Two approaches to investigation of deterministic and
  nondeterministic decision trees complexity.
\newblock In: 2nd World Conference on the Fundamentals of Artificial
  Intelligence, {WOCFAI} 1995, pp. 275--280 (1995)

\bibitem{Moshkov96}
Moshkov, M.: Comparative analysis of deterministic and nondeterministic
  decision tree complexity. {G}lobal approach.
\newblock Fundam. Inform. \textbf{25}(2), 201--214 (1996)

\bibitem{Moshkov05a}
Moshkov, M.: Comparative analysis of deterministic and nondeterministic
  decision tree complexity. {L}ocal approach.
\newblock In: Trans. Rough Sets IV, \emph{Lecture Notes in Computer Science},
  vol. 3700, pp. 125--143. Springer, Berlin Heidelberg (2005)

\bibitem{Moshkov05}
Moshkov, M.: Time complexity of decision trees.
\newblock In: Trans. Rough Sets III, \emph{Lecture Notes in Computer Science},
  vol. 3400, pp. 244--459. Springer, Berlin Heidelberg (2005)

\bibitem{Moshkov07}
Moshkov, M.: On the class of restricted linear information systems.
\newblock Discret. Math. \textbf{307}(22), 2837--2844 (2007)

\bibitem{Moshkov23}
Moshkov, M.: Time and space complexity of deterministic and nondeterministic
  decision trees.
\newblock Ann. Math. Artif. Intell. \textbf{91}(1), 45--74 (2023)

\bibitem{MPZ08}
Moshkov, M., Piliszczuk, M., Zielosko, B.: Partial Covers, Reducts and Decision
  Rules in Rough Sets - Theory and Applications, \emph{Studies in Computational
  Intelligence}, vol. 145.
\newblock Springer, Berlin Heidelberg (2008)

\bibitem{Moshkov11}
Moshkov, M., Zielosko, B.: Combinatorial Machine Learning - {A} Rough Set
  Approach, \emph{Studies in Computational Intelligence}, vol. 360.
\newblock Springer, Berlin Heidelberg (2011)

\bibitem{Naiman96}
Naiman, D.Q., Wynn, H.P.: Independence number and the complexity of families of
  sets.
\newblock Discr. Math. \textbf{154}, 203--216 (1996)

\bibitem{Pawlak81}
Pawlak, Z.: Information systems theoretical foundations.
\newblock Inf. Syst. \textbf{6}(3), 205--218 (1981)

\bibitem{Pawlak91}
Pawlak, Z.: Rough Sets - Theoretical Aspects of Reasoning about Data,
  \emph{Theory and Decision Library : series {D}}, vol.~9.
\newblock Kluwer (1991)

\bibitem{PawlakPS08}
Pawlak, Z., Polkowski, L., Skowron, A.: Rough set theory.
\newblock In: B.W. Wah (ed.) Wiley Encyclopedia of Computer Science and
  Engineering. John Wiley {\&} Sons, Inc. (2008).
\newblock \urlprefix\url{https://doi.org/10.1002/9780470050118.ecse466}

\bibitem{PawlakS07}
Pawlak, Z., Skowron, A.: Rudiments of rough sets.
\newblock Inf. Sci. \textbf{177}(1), 3--27 (2007)

\bibitem{Rokach07}
Rokach, L., Maimon, O.: Data Mining with Decision Trees - Theory and
  Applications, \emph{Series in Machine Perception and Artificial
  Intelligence}, vol.~69.
\newblock WorldScientific, Singapore (2007)

\bibitem{Sauer72}
Sauer, N.: On the density of families of sets.
\newblock J. of Combinatorial Theory (A) \textbf{13}, 145--147 (1972)

\bibitem{Shelah72}
Shelah, S.: A combinatorial problem; stability and order for models and
  theories in infinitary languages.
\newblock Pacific J. of Mathematics \textbf{41}, 241--261 (1972)

\bibitem{SkowronR92}
Skowron, A., Rauszer, C.: The discernibility matrices and functions in
  information systems.
\newblock In: R.~Slowinski (ed.) Intelligent Decision Support - Handbook of
  Applications and Advances of the Rough Sets Theory, \emph{Theory and Decision
  Library}, vol.~11, pp. 331--362. Springer (1992)

\bibitem{Steele82}
Steele, J.M., Yao, A.C.: Lower bounds for algebraic decision trees.
\newblock J. Algorithms \textbf{3}(1), 1--8 (1982)

\bibitem{Yao92}
Yao, A.C.: Algebraic decision trees and {E}uler characteristics.
\newblock In: 33rd Annual Symposium on Foundations of Computer Science, FOCS
  1992, pp. 268--277 (1992)

\bibitem{Yao94}
Yao, A.C.: Decision tree complexity and {B}etti numbers.
\newblock In: 26th Annual {ACM} Symposium on Theory of Computing, STOC 1994,
  pp. 615--624 (1994)

\end{thebibliography}

\end{document}